\documentclass[12pt, draftclsnofoot, onecolumn]{IEEEtran}
 \usepackage{amsmath,amssymb}
 \usepackage{graphicx,graphics,color,psfrag}
 \usepackage{cite,balance}
 \usepackage{caption}
 \usepackage{subcaption}
 \captionsetup{font={small}}
 \allowdisplaybreaks
 \usepackage{algorithm}
 \usepackage{accents}
 \usepackage{amsthm}
 \usepackage{amsmath}
 \usepackage{bm}
 \usepackage{algorithmic}
 \usepackage[english]{babel}
 \usepackage{multirow}
 \usepackage{enumerate}
 \usepackage{cases}
 \usepackage{stfloats}
 \usepackage{dsfont}
 \usepackage{color,soul}
 \usepackage{amsfonts}
 \usepackage{cite,graphicx,amsmath,amssymb}
 \usepackage{fancyhdr}
 \usepackage{hhline}
 \usepackage{graphicx,graphics}
 \usepackage{array,color}
 \usepackage{amsmath}
 \usepackage{booktabs}

\makeatletter
\renewcommand{\fnum@figure}{Fig. \thefigure}
\makeatother

\setlength{\textfloatsep}{0.5cm}
\setlength{\floatsep}{0.5cm}

\newtheoremstyle{itshape}
  {.0\baselineskip\@plus.0\baselineskip\@minus.0\baselineskip}
  {.0\baselineskip\@plus.0\baselineskip\@minus.0\baselineskip}
  {\itshape}
  {}
  {\bfseries}
  {.}
  { }
  {}

\theoremstyle{itshape}

\include{header}

\newtheorem{remark}{Remark}
\newtheorem{proposition}{Proposition}
\newtheorem{lemma}{Lemma}

\begin{document}
\title{Multiuser Computation Offloading and Downloading for Edge Computing with Virtualization}

\author{Zezu Liang, Yuan Liu, Tat-Ming Lok, and Kaibin Huang

\thanks{Z. Liang and T. M. Lok are with Department of Information Engineering, The Chinese University of Hong Kong, Hong Kong (e-mail: lz017@ie.cuhk.edu.hk;  tmlok@ie.cuhk.edu.hk). Y. Liu is with school of Electronic and Information Engineering,
South China University of Technology, Guangzhou 510641, China (e-mail: eeyliu@scut.edu.cn).  K. Huang is with Department of Electrical and Electronic Engineering, The University of Hong Kong, Hong Kong (e-mail: huangkb@eee.hku.hk).}
}

\maketitle

\vspace{-1.5cm}
\begin{abstract}
Mobile-edge computing (MEC) is an emerging technology for  enhancing the computational capabilities of the mobile devices and reducing  their energy consumption via offloading complex computation  tasks to the nearby servers. Multiuser MEC at servers is widely realized via parallel computing based on virtualization. Due to finite shared I/O resources, interference between virtual machines (VMs), called I/O interference,  degrades the computation performance. In this paper, we study  the problem of joint radio-and-computation resource allocation (RCRA) in multiuser  MEC systems in the presence of I/O interference. Specifically, offloading scheduling algorithms are designed targeting  two system performance metrics: sum offloading throughput maximization  and sum mobile energy consumption minimization.  Their designs are formulated as non-convex mixed-integer programming problems, which account for latency due to offloading, result downloading and parallel computing. A set of low-complexity algorithms are designed based on a decomposition approach and leveraging classic techniques from combinatorial optimization. The resultant algorithms jointly schedule offloading users, control their offloading sizes, and divide time for communication (offloading and downloading) and computation. They are either optimal or can achieve close-to-optimality as shown  by simulation. Comprehensive simulation results demonstrate considering of I/O interference can endow on an offloading controller robustness against the performance-degradation factor.
\end{abstract}

\begin{IEEEkeywords}
Mobile edge computing (MEC), parallel computing, I/O interference, virtual machine (VM), resource allocation.
\end{IEEEkeywords}

\section{Introduction}
Driven by the increasing popularity of mobile devices (e.g., smart phones, tablets, wearable devices), a wide range of new mobile applications (e.g., augmented reality, face recognition, interactive online-gaming) are emerging. They usually require intensive computation to enable real-time machine-to-machine and machine-to-human interactions. The limited energy and computation resources at the mobile devices may not be sufficient for meeting the requirement. To address these limitations, mobile-cloud computing (MCC) \cite{survey1} offers one possible  solution by migrating the computation-intensive tasks from mobiles to the cloud. However, data propagation through wide area networks (including the radio-access network, backhaul-network, and Internet)  can cause excessive latency. Therefore, MCC may not be able to support latency-critical applications.

Recently, mobile-edge computing (MEC) \cite{White_Paper,survey}, which provides users computing services using servers at the network edge, is envisioned as a promising way to enable computation-intensive and latency-sensitive mobile applications. Compared with MCC, users in MEC systems offload tasks to the proximate edge servers [e.g., base stations (BSs) and access points (APs)] for execution, which avoids data delivery over the backhaul networks and thereby dramatically reduces latency. An essential technology for implementing MEC is virtualization, referring to sharing of a physical machine (server) by multiple computing processes via the execution of virtual machines (VMs). Specifically, each VM is a virtual computer configured with a certain amount of the server's hardware resource (such as CPU, memory and I/O bus). According to technical standards for the MEC server architecture\cite{White_Paper}, the virtualization functionality is supported by a virtualization layer and a virtualization manager. The virtualization layer virtualizes the MEC hosting infrastructure by abstracting the detailed hardware implementation, while the virtualization manager provides the virtualized computer infrastructure as a service (IaaS). Applications run on top of an IaaS and are deployed within the packaged-operating systems (i.e.,VMs) that are well-isolated with the others. To this end, the MEC server can isolate co-hosted applications and provide multi-service support. Nevertheless, it has been shown in the literature \cite{VM1, netIO2,netIO3} that sharing the same physical platform can incur the so-called I/O interference among VMs, resulting in a certain degree of computation-speed reduction for each VM. As far as we know, prior research of this issue focuses on the interference modeling \cite{netIO2, netIO3,parallel_computing_mode} and computation resource provisioning \cite{8495906}. No previous works related to the computation offloading coupled with joint radio-and-computational resource allocation (RCRA) have been studied before. In this paper, we investigate the multiuser offloading problem in a MEC system in the presence of I/O interference.

\subsection{Prior Work}
In recent years, extensive research has been conducted on efficient computation offloading for MEC systems. For single-user MEC systems, a research focus is designing policies for task assignment or partitioning. A binary-offloading policy (decides on whether an entire task should be offloaded for edge execution or computed locally) has been widely investigated in different system scenarios, including stochastic wireless channels \cite{singleuser1}, MEC systems powered by energy harvesting \cite{Dynamic_JunZhang} or wireless energy transfer \cite{singleuser4}. Based on program partitioning, partial offloading is possible where a computation task at a user can be partitioned into multiple parts for local computing and offloading at the same time. The optimal offloading strategies for partial offloading are studied in \cite{singleuser2,singleuser3}.

For multiuser MEC systems, the efficient computation offloading designs requires joint optimization of RCRA, i.e., how to allocate the finite radio-and-computational resources to multiple users for achieving a system-level objective, e.g., the sum energy consumption minimization. The problem is challenging as multiplicity of parameters and constraints are involved such as multi-user channel states and task information, computation capacities of servers and users, and deadline constraints. In \cite{Multiuser1}, the resource-allocation strategies were proposed based on time-division multiple access (TDMA) and orthogonal frequency-division multiple access (OFDMA). It is assumed that the task-execution durations at the edge cloud are negligible, overlooking the effect of finite computation resources at servers in offloading decisions. In \cite{multiuser3,additional1}, game theory was applied to designing efficient distributed offloading. \cite{multiuser4, multiuser5} studied the multi-cell MEC systems, where joint RCRA under given offloading decisions was optimized in \cite{multiuser4} while \cite{multiuser5} further incorporated offloading decisions into optimization. In \cite{stochastic,additional4}, dynamic offloading policies were proposed to investigate the energy-delay tradeoff for stochastic MEC systems. Energy-efficient offloading designs have also been studied in other scenarios like wireless power transfer \cite{Computation_rate_Bi,wireless_powered} and cooperative transmissions \cite{NOMA,cran}. The work in \cite{multiuser6} is closely related to this paper, as they both address parallel computation at a MEC server for joint RCRA. However, simultaneous computation processes at the same server are assumed in \cite{multiuser6} to be independent and conditioned on partitioned computation resources. The effect of I/O interference is neglected despite its being an importance issue in virtualization.

Omitting I/O interference in multiuser MEC based on virtualization leads to the unrealistic assumption that the total computation resource at a server remains fixed regardless of the number of VMs. In reality, the resource reduces as the number grows due to I/O interference. Thus, the number of VMs per server is usually constrained in practice, so as to maintain the efficiency in resource utilization. Despite its importance, I/O interference has received little attention in the literature. It motivates the current work on accounting for the factor in resource allocation for MEC systems.

\subsection{Contributions and Organization}
In this paper, we revisit the RCRA problem in multiuser offloading and address the following two practical issues that have received scant attention in the literature.

\begin{enumerate}
\item (I/O interference) The I/O interference in practical parallel computing has been largely neglected in the existing ``cake-slicing" model of computing-resource allocation (see e.g., \cite{multiuser6,stochastic}). Considering I/O interference introduces a \emph{dilemma}: scheduling more offloading users increases the multiplexing gain in parallel computing but degrades the speeds of individual VMs due to their interference.
\item (Result downloading) The communication overhead for computation-result downloading is commonly assumed in the literature to be negligible compared with that for offloading. The assumption does not always hold in applications such as augmented reality and image processing. Considering downloading complicates scheduling as the policy needs account for not only users' uplink channel states but also downlink states as well as the output-input-size ratio  for each task.
\end{enumerate}

In this paper, we consider a multiuser MEC system where parallel computing at the sever is based on virtualization. The I/O interference is modelled using a practical model developed based on measurement data \cite{parallel_computing_mode}. While the literature focuses on offloading latency, we consider offloading, parallel computing and downloading as factors contributing to latency. Based on the assumptions, scheduling policies are designed by solving two RCRA problems based on two criteria, namely  \emph{maximizing the sum offloading throughput} and  \emph{minimizing  the sum mobile energy consumption}, both under a latency constraint. The main contributions are summarized as follows.

\begin{itemize}
\item{\bf (Sum Offloading Throughput Maximization)} Based on this criterion, the RCRA problem is formulated as a non-convex problem for joint optimization of offloading scheduling, offloaded-data sizes, and communication-and-computation time division. By analyzing its properties,  we present a solution approach of decomposing the problem into master and slave sub-problems. The former optimizes the number of offloading users and given the number, the later optimizes offloading-user set, offloaded-data sizes, and time division (offloading, computing, downloading). By adopting Dinkelbach method,  an efficient iteratively algorithm is designed to solve the slave problem that is a \emph{combinatorial-optimization} problem. With the algorithm, the master problem can be then solved by a simple search over a finite integer set of possible numbers of offloading users. In addition, special cases are studied to yield useful design guidelines.

\item{\bf (Sum Mobile Energy Minimization)}{ The problem of sum-energy minimization is also non-convex. To develop practical scheduling algorithms for efficiently solving the problem, we divide the whole user set into multiple subsets based on the corresponding levels of offloading gain in terms of energy reduction. Then some reasonable rules are introduced to prioritize the user subsets' offloading so as to enable tractable algorithmic design. Based on the rules, an efficient greedy algorithm is proposed to schedule different subsets of users which achieves close-to-optimal performance as demonstrated by simulation.}
\end{itemize}

The rest of this paper is organized as follows. In Section II, we present our system model and problem formulation. In Section III, we propose an optimal algorithm to solve the problem of sum offloading throughput  maximization and discuss  special cases. The problem of  sum energy minimization  is studied in Section IV. Finally,  simulation results and conclusions are provided in Section V and Section VI, respectively.

\section{System Model and Problem Formulation}

\subsection{System Model}

Consider an MEC system shown in Fig. \ref{fig:subfig}, consisting of one AP integrated with an MEC server and $K$  users. Partial offloading is assumed in this paper so that each user can partition its computation task into two independent parts for  local computing and  offloading to the server. The two operations are simultaneous  as the communication modulars and user CPUs are separated. All of the users have to complete their tasks within a fixed duration $T$ (in second) so as to meet a real-time requirement. The system operation is divided into three \emph{sequential phases}: 1) TDMA-based task offloading by users, 2) parallel computing at the server, and 3) TDMA-based computation-result downloading from the server to users. Corresponding models and assumptions are described as follows.
%

\begin{figure}[t]
\begin{centering}
\includegraphics[width=0.7 \linewidth]{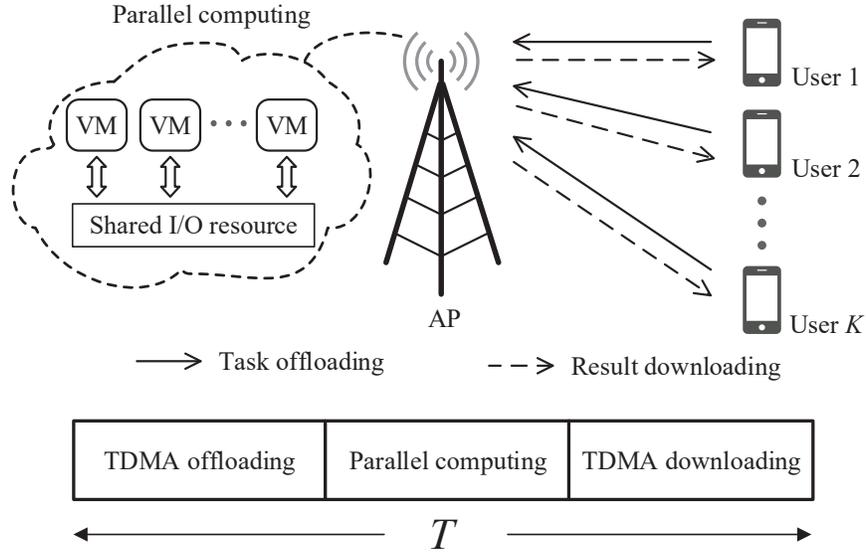}
\vspace{-0.2cm}
\caption{ A multiuser MEC system comprising a single AP and $K$ users.}\label{fig:subfig}
\end{centering}
\end{figure}

\subsubsection{Offloading and Downloading Phases}
Let $\ell_i$ denote the input data bits offloaded by user $i$ to the server. It is assumed that each input bit generates  $\gamma_i$ bits of computation result. Then for an offloaded data $\ell_i$, the computed result contains  $\gamma_i\ell_i$ bits. The  transmission   delay for  user $i$ for  offloading  and downloading can be written separately as
\begin{align}
t_{i}^u&=a_i \ell_i, \label{eqn:tiu}\\
t_{i}^d&=b_i \gamma_i \ell_i, \label{eqn:tid}
\end{align}
where $a_i$ and $b_i$ are the required time for transmitting a single bit in uplink and downlink, respectively, which are the inverse of the corresponding  uplink and downlink rates.

\subsubsection{Parallel-Computing Phase with Virtualization}
After receiving all the offloaded tasks, the server executes them in parallel by creating  multiple VMs. We consider the important factor of I/O interference in parallel computing \cite{netIO} and adopt a model developed in the literature based on measurement data \cite{parallel_computing_mode,survey}, which is described as follows.  Group the user indices into the set  $\mathcal{K}$. The subset $\mathcal{S}\subseteq\mathcal{K}$ identifies the set of scheduled offloading users, $t_e$ the time allocated to the parallel-computing phase, and $r_i$ the expected computation-service rate (bits/sec) of a VM given task $i$ when running in isolation.  Following  \cite{parallel_computing_mode,survey}, a performance degradation factor $d>0$  is defined to specify the percentage reduction in the computation-service rate of a VM when multiplexed with another VM. \footnote{The parameter $d$ depends on the specific VM multiplexing and
placement strategy \cite{degradation_factor1,degradation_factor2}. Its value  can be estimated by theoretical studies or statistical observations.} Suppose that one VM is created and assigned to a task, the degraded computing rate for each task is modeled as $r_i (1+d)^{1-|\mathcal{S}|}$ \cite{parallel_computing_mode}, where $|\mathcal{S}|$ denotes the number of tasks (or offloading users) for parallel computing. Therefore, for given $t_e$, the numbers of offloadable bits are constrained by
\begin{align}\label{eqn:st_ell}
0\leq \ell_i\leq t_e r_i (1+d)^{1-|\mathcal{S}|}, \quad \forall i\in \mathcal{S}.
\end{align}
The constraints in \eqref{eqn:st_ell} show that the maximum number of offloadable bits per user  decreases with the number of offloaded tasks due to the I/O interference in parallel computing. Moreover, relaxing the duration for parallel computing ($t_e$) can accommodate more offloaded bits ($\{\ell_i\}$), however, at the cost of less  time for  the offloading and downloading phases. This introduces a  tradeoff between the three phases under the following total-latency constraint:
\begin{align}
\sum_{i\in\mathcal{S}}t_i^u+t_e+\sum_{i\in\mathcal{S}}t_i^d =\sum_{i\in \mathcal{S}}\ell_i\left(a_i+b_i\gamma_i\right)+t_e \leq T. \label{eqn:T}
\end{align}

\subsection{Problem Formulation}
In this paper, we consider two popular system-performance metrics: sum offloading rate maximization and sum energy consumption minimization by users. The metrics target  two different scenarios where users are constrained in computing capacity and energy, respectively. Correspondingly, offloading aims at either enhancing user capacities or reducing their energy consumption. Using the metrics, two RCRA problems are formulated as follows.

\subsubsection{Sum Offloading Rate Maximization}
The objective is to maximize the weighted sum of the users' offloading rates by joint offloading-user scheduling, offloaded-bits control, and three-phase time allocation. Here, the sum offloading rate is defined as the sum offloadable bits over the time duration $T$. Let $\omega_i$ denote a positive weight assigned to user $i$ based on the users' priority. Mathematically, the optimization problem can be formulated as
\begin{subequations}
\begin{align}
(\text{P1}):~\max_{\mathcal{S}\subseteq \mathcal{K}, \{\ell_i\}, t_e}~~&R=\frac{1}{T}\sum_{i\in \mathcal{S}} \omega_i\ell_i  \label{eqn:Ob} \\
{\rm{s.t.}}~~~&\sum_{i\in \mathcal{S}}\ell_i\left(a_i+b_i\gamma_i\right)+t_e\leq T, \label{eqn:st_T}\\
&0\leq \ell_i \leq t_e r_i \left(1+d\right)^{1-|\mathcal{S}|},\quad \forall i \in \mathcal{S}. \label{eqn:st_ell1}
\end{align}
\end{subequations}

Problem (P1) is a mixed-integer programming problem comprising  both a combinatorial variable $\mathcal{S}$ and continuous variables $(\{\ell_i\}, t_e)$ and non-convex constraints \eqref{eqn:st_ell1}. Therefore, Problem (P1) is \emph{non-convex}. Though such a problem is usually difficult to solve exactly, an algorithm is designed in sequel to find the optimal solution.

\subsubsection{Sum Energy Minimization} We aim at minimizing the total energy consumed at all the users. Suppose that each user $i\in\mathcal{K}$ has a computation task of length $L_i$ bits, of which $\ell_i$ bits are  offloaded to the AP and $(L_i-\ell_i)$ bits computed locally. In parallel with offloading, the allowed duration for  local computing at any user  is $T$. Let the duration for user $i$ be denoted as $t_i^{\text{loc}}$. Then we have the following time constraint on local computing:
\begin{align}
t_i^{\text{loc}}&=\frac{c_i (L_i-\ell_i)}{f_i} \leq T,  \quad \forall i\in \mathcal{K}, \label{eqn:local_latency}
\end{align}
where $c_i$ denotes the fixed number of CPU cycles required to compute a single bit and $f_i$ denotes the CPU frequency at user $i$ (CPU cycles/sec). The  energy consumption for computing $(L_i-\ell_i)$  bits at user $i$ can be written as
\begin{align}
E_i^{\text{loc}}&=\kappa_i c_i (L_i-\ell_i) f_i^2,
\end{align}
where $\kappa_i$ is a coefficient depending on the specific hardware architecture. Combining constraints \eqref{eqn:st_ell} and \eqref{eqn:local_latency} and considering the fact that $0\leq\ell_i\leq L_i$ yield  the constraint on the number of offloadable bits as
\begin{align}
L_i^{\text{min}}\leq \ell_i \leq \min\left\{L_i, t_e r_i \left(1+d\right)^{1-|\mathcal{S}|}\right\},\quad \forall i \in \mathcal{S}, \label{eqn:st_ell2}
\end{align}
where $L_i^{\text{min}}\triangleq \left[L_i-\frac{Tf_i}{c_i}\right]^+$, with $[\cdot]^+\triangleq \max\{\cdot, 0\}$, is derived from \eqref{eqn:local_latency} by setting $t_i^{\text{loc}}=T$. On the other hand, the energy consumption for offloading $\ell_i$ bits is $E_i^{\text{off}}=t_i^u p_i=a_ip_i\ell_i$. Therefore, the total energy consumption of each user is $E_i=E_i^{\text{loc}}+E_i^{\text{off}}$. Then, the weighted sum energy consumption of all users can be expressed as
\begin{align} \label{eqn:Energy_objective}
E&=\sum_{i\in \mathcal{K}}\omega_i\left[E_i^{\text{loc}}+ E_i^{\text{off}} \right]\nonumber \\
&=\sum_{i\in\mathcal{K}}\omega_i\left[\kappa_ic_i\left(L_i-\ell_i\right)f_i^2+a_ip_i\ell_i\right] \nonumber \\
&=\sum_{i\in\mathcal{K}}\omega_i\left(a_ip_i-\kappa_ic_if_i^2\right)\ell_i+\sum_{i\in\mathcal{K}}\omega_i\kappa_ic_iL_if_i^2\nonumber \\ &=\sum_{i\in \mathcal{K}}\theta_i\ell_i + \rm{e}_0,
\end{align}
where $\theta_i\triangleq \omega_i\left(a_ip_i-\kappa_ic_if_i^2\right)$ and ${\rm{e}_0}\triangleq\sum_{i\in \mathcal{K}}\omega_i\kappa_ic_iL_if_i^2 $ are both constants.

Given the objective of minimizing the sum-energy consumption in \eqref{eqn:Energy_objective} subject to the time constraint in \eqref{eqn:T} and the offloadable bits constraints in \eqref{eqn:st_ell2}, the corresponding RCRA  problem can be formulated as
\begin{subequations}
\begin{align}
(\text{P2}):~\min_{\mathcal{S}\subseteq \mathcal{K}, \{\ell_i\}, t_e}~~&\sum_{i\in \mathcal{S}} \theta_i \ell_i \label{eqn:Ob2} \\
{\rm{s.t.}}~~~&\sum_{i\in \mathcal{S}}\ell_i\left(a_i+b_i\gamma_i\right)+t_e\leq T,  \label{eqn:st_T2}\\
&L_i^{\text{min}}\leq \ell_i \leq \min\left\{L_i, t_e r_i \left(1+d\right)^{1-|\mathcal{S}|}\right\},\quad \forall i \in \mathcal{S}, \label{eqn:st_S2}
\end{align}
\end{subequations}
where the objective function \eqref{eqn:Ob2} is derived from \eqref{eqn:Energy_objective} by omitting the constant term $\rm{e}_0$ and combining the fact that $\ell_i=0$ for all non-scheduled users.

Like Problem (P1), Problem (P2) is also \emph{non-convex}. An efficient algorithm is developed in the sequel to approximately solve this problem.

\section{Sum Offloading Rate Maximization}

In this section, we develop  an optimal algorithm for  solving  Problem (P1). First, an important property of the optimal offloading-user set $\mathcal{S}^*$ will be obtained, which allows tractable analysis of the  optimal offloading scheme and thereby simplifies  the problem. Subsequently, an  iterative algorithm based on the Dinkelbach method is  proposed to exactly solve the simplified problem. Last, we  discuss several special cases to obtain useful insights.

\subsection{Sum Offloading Rate Maximization for a  Given Offloading-User Set}

We  made a key observation  that Problem (P1) becomes a linear programming (LP) problem if the offloading-user set is given. The conditional optimal offloading strategy, specified by the offloaded data sizes  $\{\ell_i^*\}$,  satisfies  the following property.

\begin{lemma}\label{lemma1}
Given an arbitrary offloading-user set $\mathcal{S}$,  the optimal offloading strategy $\{\ell_i^*\mid i \in \mathcal{S}\}$  must be the maximum or minimum value in the constraint in \eqref{eqn:st_ell1}.
\end{lemma}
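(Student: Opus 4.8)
The plan is to build directly on the observation already recorded in the text: once $\mathcal{S}$ is fixed, the exponent $|\mathcal{S}|$ is a constant, so setting $g \triangleq (1+d)^{1-|\mathcal{S}|}$ and $w_i \triangleq a_i + b_i\gamma_i$, Problem (P1) collapses to the linear program of maximizing $\sum_{i\in\mathcal{S}}\omega_i \ell_i$ over the $(|\mathcal{S}|+1)$ variables $(\{\ell_i\}_{i\in\mathcal{S}}, t_e)$, subject to the single latency inequality $\sum_{i\in\mathcal{S}}\ell_i w_i + t_e \le T$ and the $2|\mathcal{S}|$ box inequalities $0 \le \ell_i \le t_e r_i g$. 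Since $\ell_i \le t_e r_i g \le T r_i g$ and $t_e \le T$, the feasible set is a bounded polytope, so an optimum is attained at an extreme point; it therefore suffices to establish the claim for such an optimal vertex, which is exactly the solution a simplex-type method would return.

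First I would dispose of the degenerate case $t_e^{*}=0$: here every upper bound collapses to $\ell_i \le 0$, which together with $\ell_i \ge 0$ forces $\ell_i^{*}=0$ for all $i$, so each $\ell_i^{*}$ coincides with its (vanishing) minimum and maximum, and the statement holds trivially. Hence I may assume $t_e^{*}>0$, so that the implied constraint $t_e \ge 0$ is inactive at the vertex.

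The core step is a dimension count. At an extreme point of a polytope in $\mathbb{R}^{|\mathcal{S}|+1}$, at least $|\mathcal{S}|+1$ constraints must be active with linearly independent normals. For a fixed index $i$, the two bounds $\ell_i \ge 0$ and $\ell_i \le t_e r_i g$ cannot both be active once $t_e>0$, as that would force $0 = t_e r_i g > 0$; thus each index contributes at most one active box constraint. Suppose, for contradiction, that some $\ell_j^{*}$ is strictly interior, $0 < \ell_j^{*} < t_e^{*} r_j g$. Then index $j$ supplies no active box constraint, and, since $t_e \ge 0$ is inactive, the only candidates for active constraints are the latency inequality (at most one) together with the box constraints of the remaining $|\mathcal{S}|-1$ indices (at most one each), for a total of at most $|\mathcal{S}|$ active constraints. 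This is strictly fewer than the $|\mathcal{S}|+1$ linearly independent active constraints required at a vertex, a contradiction. Hence every $\ell_j^{*}$ has one of its two bounds active, i.e. $\ell_j^{*}\in\{0,\, t_e^{*} r_j g\}$, which is precisely the claim.

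The main obstacle is the coupling of all upper bounds to the shared variable $t_e$, which is exactly what prevents treating the problem as a plain box-constrained knapsack, in which a single fractional item would generically survive. A more constructive route makes this explicit and isolates the difficulty: for fixed $t_e$, a standard fractional-knapsack exchange argument (shifting budget between any two interior items along a direction that keeps the latency constraint tight and moves the $\omega_i/w_i$-weighted objective monotonically) drives all but at most one $\ell_i$ to a bound; the residual freedom in $t_e$ must then be spent to push the last fractional item to a bound, using that the objective is linear along any feasible $t_e$-perturbation and so can be increased or held fixed while that item is forced to $0$ or to its (moving) cap. I would nonetheless present the dimension-counting argument as the primary proof, precisely because it sidesteps the delicate sign/case analysis of this final $t_e$-perturbation.
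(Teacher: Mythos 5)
Your proof is correct and follows essentially the same route as the paper's: for fixed $\mathcal{S}$ the problem is an LP, an optimum lies at a vertex of the bounded feasible polytope in $\mathbb{R}^{|\mathcal{S}|+1}$, and counting active constraints (noting that $\ell_i\ge 0$ and $\ell_i\le t_e r_i(1+d)^{1-|\mathcal{S}|}$ cannot both be active when $t_e>0$) shows every $\ell_i^*$ must sit at one of its two bounds. Your version is, if anything, slightly tidier than the paper's, since handling $t_e^*=0$ separately and bounding the latency constraint's contribution by one lets you skip both the claim that the latency constraint is active and the pigeonhole step the paper uses.
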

\begin{proof}
See Appendix A.
\end{proof}

Lemma \ref{lemma1} indicates that the optimal offloading strategy of each scheduled user follows a binary policy, i.e., offloading  with the maximum data size or  nothing. Accordingly, we can divide the elements in  $\{\ell_i\mid i \in \mathcal{S}\}$  into two groups, with one group $\widetilde{\mathcal{S}}$ for the users offloading maximum bits and the other performing no  offloading, i.e.,
\begin{align}\label{eqn:ell_S_tilde}
\ell_i&=
\begin{cases}
t_e r_i \left(1+d\right)^{1-|\mathcal{S}|}, &i\in\widetilde{\mathcal{S}},  \\
0, &i\in \mathcal{S}\backslash\widetilde{\mathcal{S}}.
\end{cases}
\end{align}
Note that $\widetilde{\mathcal{S}}$ is needed to be determined and we first use $\widetilde{\mathcal{S}}$ to express $\{\ell_i\}$ and $t_e$. It is intuitive that the equality must hold in constraint \eqref{eqn:st_T} for the optimal solution of Problem (P1). Then, by substituting  \eqref{eqn:ell_S_tilde} into \eqref{eqn:st_T}, we obtain the conditional optimal parallel-computing time $t_e$ as
\begin{align}
t_e=\frac{T(1+d)^{|\mathcal{S}|-1}}{(1+d)^{|\mathcal{S}|-1}+\sum_{i\in \widetilde{\mathcal{S}}}(a_i +b_i\gamma_i)r_i}. \label{eqn:t_e_S_tidle}
\end{align}
Combining \eqref{eqn:ell_S_tilde} and \eqref{eqn:t_e_S_tidle}, Problem (P1) for a given $\mathcal{S}$ can be formulated  as one for determining the subset $\widetilde{\mathcal{S}}$ in $\mathcal{S}$:
\begin{align}\label{eqn:tildeS}
\max_{\widetilde{\mathcal{S}}\subseteq\mathcal{S}}~~R=\frac{\sum_{i\in \widetilde{\mathcal{S}}}\omega_i r_i}{(1+d)^{|\mathcal{S}|-1}+\sum_{i\in \widetilde{\mathcal{S}}}(a_i+b_i\gamma_i)r_i}.
\end{align}
It is observed that, for any given $\mathcal{S}$, if $\widetilde{\mathcal{S}}^*\neq\mathcal{S}$, $R$ in \eqref{eqn:tildeS} can be further improved via replacing $\mathcal{S}$ with the smaller subset $\widetilde{\mathcal{S}}^*$. In other words, there exists the users $i\in \mathcal{S} \backslash\widetilde{\mathcal{S}}^*$ who offload zero bits but are scheduled to unnecessarily create a VM at the server, resulting in waste of resources. Thereby, removing them from the offloading-user set and only allocating VMs to the full offloading users can further increase the sum offloading rate. By the above argument, the necessary condition for $\mathcal{S}$ being optimal of Problem (P1) is $\widetilde{\mathcal{S}}^*=\mathcal{S}$ in Problem \eqref{eqn:tildeS}. That is, all the scheduled users offload their maximum bits, or otherwise the given $\mathcal{S}$ is not optimal.

Thus, we re-define $R$ as the sum offloading rate achieved by $\widetilde{\mathcal{S}}=\mathcal{S}$ in Problem \eqref{eqn:tildeS}, i.e.,
\begin{align}\label{eqn:R_revised}
R= \frac{\sum_{i\in \mathcal{S}}\omega_i r_i}{(1+d)^{|\mathcal{S}|-1}+\sum_{i\in \mathcal{S}}(a_i+b_i\gamma_i)r_i}.
\end{align}
We have the following proposition to identify whether $\mathcal{S}$ meets the necessarily optimal condition.
\begin{proposition}\label{proposition1}
$\widetilde{\mathcal{S}}=\mathcal{S}$ is the optimal solution of Problem \eqref{eqn:tildeS} if and only if the given offloading-user set $\mathcal{S}$ satisfies
\begin{align}
R\leq\min_{i\in \mathcal{S}}\left\{\frac{\omega_i}{a_i+b_i \gamma_i}\right\}. \label{eqn:S_pro1}
\end{align}
\end{proposition}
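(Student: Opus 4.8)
The plan is to reduce both directions to a single algebraic identity obtained from the definition of $R$ in \eqref{eqn:R_revised}. The starting observation is that the condition \eqref{eqn:S_pro1} is equivalent to a per-user nonnegativity statement: since $a_i+b_i\gamma_i>0$ and $r_i>0$, the inequality $R\leq \omega_i/(a_i+b_i\gamma_i)$ holds for every $i\in\mathcal{S}$ if and only if $\omega_i r_i - R(a_i+b_i\gamma_i)r_i\geq 0$ for every $i\in\mathcal{S}$. I would therefore introduce the shorthand $g_i\triangleq \omega_i r_i - R(a_i+b_i\gamma_i)r_i$ and restate \eqref{eqn:S_pro1} as ``$g_i\geq 0$ for all $i\in\mathcal{S}$.''

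The linchpin is to rewrite the defining relation \eqref{eqn:R_revised}. Cross-multiplying and rearranging gives the identity
\begin{align}
R\,(1+d)^{|\mathcal{S}|-1}=\sum_{i\in\mathcal{S}}\big[\omega_i r_i - R(a_i+b_i\gamma_i)r_i\big]=\sum_{i\in\mathcal{S}}g_i. \label{eqn:key_identity}
\end{align}
This expresses the constant ``base'' term $(1+d)^{|\mathcal{S}|-1}$ entirely in terms of the $g_i$, which is precisely what lets the self-referential appearance of $R$ on both sides of \eqref{eqn:S_pro1} be handled without circularity. I would emphasize here that, in Problem \eqref{eqn:tildeS}, the exponent is the fixed $|\mathcal{S}|-1$, so this base term does \emph{not} change as $\widetilde{\mathcal{S}}$ varies over subsets of $\mathcal{S}$.

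For sufficiency, suppose $g_i\geq 0$ for all $i\in\mathcal{S}$ and take any $\widetilde{\mathcal{S}}\subseteq\mathcal{S}$, writing $R(\widetilde{\mathcal{S}})$ for the objective value of \eqref{eqn:tildeS}. Cross-multiplying the desired inequality $R(\widetilde{\mathcal{S}})\leq R$ and using that the base term is unchanged, I would reduce it to $\sum_{i\in\widetilde{\mathcal{S}}}g_i\leq R(1+d)^{|\mathcal{S}|-1}$; substituting \eqref{eqn:key_identity} then leaves exactly $\sum_{i\in\mathcal{S}\setminus\widetilde{\mathcal{S}}}g_i\geq 0$, which is immediate since each $g_i\geq 0$. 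Hence $\widetilde{\mathcal{S}}=\mathcal{S}$ maximizes \eqref{eqn:tildeS}.

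For necessity, I would argue the contrapositive by single-element removal. If some $j\in\mathcal{S}$ has $g_j<0$, applying the same cross-multiplication to $\widetilde{\mathcal{S}}=\mathcal{S}\setminus\{j\}$ and invoking \eqref{eqn:key_identity} reduces $R(\mathcal{S}\setminus\{j\})>R$ to the condition $g_j<0$, so removing $j$ strictly improves the objective and $\widetilde{\mathcal{S}}=\mathcal{S}$ cannot be optimal. The main obstacle is purely bookkeeping: the term $(1+d)^{|\mathcal{S}|-1}$ is constant in $\widetilde{\mathcal{S}}$ and must be carried correctly through every cross-multiplication and then eliminated via \eqref{eqn:key_identity}; once that substitution is set up, both directions collapse to the trivial fact that a sum of nonnegative terms is nonnegative.
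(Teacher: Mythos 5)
Your proof is correct, but it takes a genuinely different route from the paper's. The paper proves Proposition \ref{proposition1} via its Lemma \ref{lemma3} (the mediant inequality $\min\{x_1/y_1,x_2/y_2\}\leq (x_1+x_2)/(y_1+y_2)\leq\max\{x_1/y_1,x_2/y_2\}$): condition \eqref{eqn:S_pro1} is shown there to be equivalent to $R$ dominating every single-removal neighbor, i.e., to \emph{local} optimality of $\widetilde{\mathcal{S}}=\mathcal{S}$, and the sufficiency direction is then completed by invoking a cited external result that local optima of this class of problems are global. Your argument instead linearizes the ratio: cross-multiplying and using the identity $R(1+d)^{|\mathcal{S}|-1}=\sum_{i\in\mathcal{S}}g_i$ turns the comparison of $R(\widetilde{\mathcal{S}})$ with $R$ into a sign condition on $\sum_{i\in\mathcal{S}\setminus\widetilde{\mathcal{S}}}g_i$, so global optimality over \emph{all} subsets $\widetilde{\mathcal{S}}\subseteq\mathcal{S}$ (not just single-removal neighbors) follows directly, with no appeal to any local-to-global equivalence, and the strict improvement needed for the necessity direction falls out automatically. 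You also correctly handle the one trap in the reduction, namely that the exponent in the base term of Problem \eqref{eqn:tildeS} is the fixed $|\mathcal{S}|-1$ rather than $|\widetilde{\mathcal{S}}|-1$. What your approach buys: it is self-contained, it treats equality edge cases cleanly, and your quantity $g_i=r_i\left[\omega_i-R(a_i+b_i\gamma_i)\right]$ is, up to a positive factor, exactly the per-user revenue $\psi_i(\cdot)$ of Remark \ref{Remark2}, so your proof ties the proposition directly to the Dinkelbach machinery the paper deploys later for the slave problem. What the paper's route buys: the mediant-inequality viewpoint makes the bottleneck-user interpretation of Remark \ref{remark1} (the slowest user drags down the system rate, so removing it helps) immediately visible, which your purely algebraic reduction obscures.
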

\begin{proof}
See Appendix B.
\end{proof}

To better understand \eqref{eqn:S_pro1}, we multiply  the term $t_e(1+d)^{1-|\mathcal{S}|}$ in both the numerator and denominator of $R$ and using the result that $\ell_i=t_e r_i (1+d)^{1-|\mathcal{S}|}$, $\forall i\in\mathcal{S}$, then $R$ in \eqref{eqn:R_revised} can be rewritten as
\begin{align}\label{eqn:RS_understanding}
R&=\frac{\sum_{i\in \mathcal{S}}\omega_i \ell_i}{t_e+\sum_{i\in\mathcal{S}}(a_i+b_i\gamma_i)\ell_i}.
\end{align}
The numerator in  \eqref{eqn:RS_understanding} denotes the sum offloaded bits and the denominator denotes the total time and equals $T$. Then, $R$ in \eqref{eqn:RS_understanding} can be physically interpreted as the sum offloading rate of the system with users of set $\mathcal{S}$. On the other hand, $\frac{\omega_i}{a_i+b_i \gamma_i}$ can be rewritten as $\frac{\omega_i \ell_i}{(a_i+b_i\gamma_i)\ell_i}$,
where the numerator denotes the offloaded bits of user $i$ while the denominator denotes the transmission duration that includes both offloading and downloading time. Therefore, $\frac{\omega_i}{a_i+b_i \gamma_i}$ can be regarded as the transmission rate of user $i$. Proposition \ref{proposition1} implies that the system offloading rate should be less than or equal to the minimum transmission rate among users in $\mathcal{S}$ if it solves Problem (P1).
\begin{remark}\label{remark1}
If the given offloading-user set $\mathcal{S}$ violates condition \eqref{eqn:S_pro1}, $R$ can be further improved by removing users with minimum transmission rate from the offloading-user set.
\end{remark}
Let index $j$ denote the user with minimum transmission rate in set $\mathcal{S}$. Remark \ref{remark1} can be illustrated using the following inequality:
\begin{align}\label{eqn:S_remark1_inequ}
\frac{\omega_j\ell_j}{(a_j+b_j\gamma_j)\ell_j}<\frac{\sum_{i\in \mathcal{S}}\omega_i \ell_i}{t_e+\sum_{i\in\mathcal{S}}(a_i+b_i\gamma_i)\ell_i}< \frac{\sum_{i\in \mathcal{S}\backslash\{j\}}\omega_i \ell_i}{t_e+\sum_{i\in\mathcal{S}\backslash\{j\}}(a_i+b_i\gamma_i)\ell_i},
\end{align}
where the middle term in \eqref{eqn:S_remark1_inequ} is identical to $R$ and the right hand side of \eqref{eqn:S_remark1_inequ} is the sum offloading rate achieved after removing user $j$. \eqref{eqn:S_remark1_inequ} reveals that when the given  offloading-user set $\mathcal{S}$ violates condition \eqref{eqn:S_pro1}, there exists a slow user (i.e., user $j$) that is a bottleneck in the transmission process. Even without accounting for the parallel computing time, its transmission rate is already slower than the system offloading rate. Therefore, removing this bottleneck user can further improve the system offloading rate.

\subsection{Offloading-User Scheduling}
Building on the results from  the last subsection, we present in this subsection an efficient scheduling algorithm for computing the optimal offloading-user set.  To this end, the variables $\{\ell_i\}$ and $t_e$ can be expressed in term of $\mathcal{S}$ when $\mathcal{S}$ meets the necessarily optimal condition \eqref{eqn:S_pro1}. This simplifies Problem (P1) as  a scheduling problem that finds the optimal offloading-user set under constraint \eqref{eqn:S_pro1}:
\begin{align}\label{eqn:problem_of_S}
\max_{\mathcal{S}\subseteq\mathcal{K}} \quad &R=\frac{\sum_{i\in \mathcal{S}}\omega_i r_i}{(1+d)^{|\mathcal{S}|-1}+\sum_{i\in \mathcal{S}}(a_i+b_i\gamma_i)r_i} \\
{\rm{s.t.}}~~ &R\leq\min_{i\in \mathcal{S}}\left\{\frac{\omega_i}{a_i+b_i \gamma_i}\right\}. \nonumber
\end{align}
The problem can be further reduced to an unconstrained optimization problem using the following useful result.
\begin{proposition}\label{proposition2}
Constraint \eqref{eqn:S_pro1} can be removed from Problem \eqref{eqn:problem_of_S} without loss of optimality.
\end{proposition}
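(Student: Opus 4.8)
The plan is to prove the stronger statement that every global maximizer of the \emph{unconstrained} problem $\max_{\mathcal{S}\subseteq\mathcal{K}} R(\mathcal{S})$ (where I write $R(\mathcal{S})$ for the objective in \eqref{eqn:problem_of_S} to emphasize its dependence on the set) automatically satisfies the constraint \eqref{eqn:S_pro1}. Since reinstating a constraint that already holds at the optimum cannot alter the optimal value, this immediately yields Proposition \ref{proposition2}. I would argue by contradiction: suppose some unconstrained-optimal set $\mathcal{S}^*$ violates \eqref{eqn:S_pro1}, so that $R(\mathcal{S}^*) > \omega_j/(a_j+b_j\gamma_j)$, where $j\triangleq\arg\min_{i\in\mathcal{S}^*}\{\omega_i/(a_i+b_i\gamma_i)\}$ is the bottleneck user of Remark \ref{remark1}. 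The goal is then to exhibit a strictly better feasible set, namely $\mathcal{S}^*\backslash\{j\}$, contradicting optimality.

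The central step is to compare $R(\mathcal{S}^*)$ and $R(\mathcal{S}^*\backslash\{j\})$ directly through the closed form \eqref{eqn:R_revised}. Writing $R(\mathcal{S}^*)=N/D$ with $N\triangleq\sum_{i\in\mathcal{S}^*}\omega_i r_i$ and $D\triangleq(1+d)^{|\mathcal{S}^*|-1}+\sum_{i\in\mathcal{S}^*}(a_i+b_i\gamma_i)r_i$, deleting $j$ subtracts $\delta\triangleq\omega_j r_j$ from the numerator and subtracts $\Delta\triangleq (a_j+b_j\gamma_j)r_j + d(1+d)^{|\mathcal{S}^*|-2}$ from the denominator, where the extra term $d(1+d)^{|\mathcal{S}^*|-2}$ records the reduced I/O-interference penalty when the cardinality drops from $|\mathcal{S}^*|$ to $|\mathcal{S}^*|-1$. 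Since both $D$ and $D-\Delta$ are positive, a one-line cross-multiplication shows that the desired strict improvement $R(\mathcal{S}^*\backslash\{j\})=\frac{N-\delta}{D-\Delta}>\frac{N}{D}=R(\mathcal{S}^*)$ is equivalent to the single ratio condition $\delta/\Delta < R(\mathcal{S}^*)$.

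It then remains to verify $\delta/\Delta < R(\mathcal{S}^*)$. Because $d>0$, the interference contribution obeys $d(1+d)^{|\mathcal{S}^*|-2}>0$, so $\delta/\Delta = \frac{\omega_j r_j}{(a_j+b_j\gamma_j)r_j + d(1+d)^{|\mathcal{S}^*|-2}} < \frac{\omega_j r_j}{(a_j+b_j\gamma_j)r_j} = \frac{\omega_j}{a_j+b_j\gamma_j}$, while the violation hypothesis supplies $\frac{\omega_j}{a_j+b_j\gamma_j} < R(\mathcal{S}^*)$; chaining these gives $\delta/\Delta < R(\mathcal{S}^*)$ and hence the promised strict increase, contradicting optimality. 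This is exactly the rigorous form of the chain \eqref{eqn:S_remark1_inequ} underlying Remark \ref{remark1}. A short boundary check completes the argument: if $|\mathcal{S}^*|=1$ then $R(\mathcal{S}^*)=\frac{\omega_j r_j}{1+(a_j+b_j\gamma_j)r_j}<\frac{\omega_j}{a_j+b_j\gamma_j}$, so a singleton can never violate \eqref{eqn:S_pro1}; thus any violating $\mathcal{S}^*$ has $|\mathcal{S}^*|\ge 2$ and the deletion of $j$ is well defined.

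The step I expect to require the most care is the bookkeeping of the interference penalty. Unlike a textbook mediant inequality, where deleting an element merely subtracts its own numerator and denominator contributions, here shrinking $\mathcal{S}^*$ also multiplies the penalty factor $(1+d)^{|\mathcal{S}|-1}$ by $(1+d)^{-1}$, so the denominator additionally loses the amount $d(1+d)^{|\mathcal{S}^*|-2}$. Keeping this term explicit is precisely what forces the decisive strict inequality $\delta/\Delta<\omega_j/(a_j+b_j\gamma_j)$ and is exactly where the hypothesis $d>0$ enters; neglecting it would collapse that gap to an equality and break the contradiction.
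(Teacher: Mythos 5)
Your proof is correct and follows essentially the same route as the paper's Appendix C: both argue by contradiction that an unconstrained maximizer violating \eqref{eqn:S_pro1} can be strictly improved by deleting the bottleneck user $j$, with the improvement coming both from removing $j$'s slow transmission contribution and from the reduced interference term $(1+d)^{|\mathcal{S}|-1}$. The only cosmetic differences are that you establish the strict improvement in a single cross-multiplication step (folding the interference saving $d(1+d)^{|\mathcal{S}^*|-2}$ into $\Delta$) where the paper splits it into two inequalities via its mediant Lemma \ref{lemma3}, and you add the singleton boundary check that the paper leaves implicit.
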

\begin{proof}
See Appendix C.
\end{proof}

Using Proposition \ref{proposition2}, Problem \eqref{eqn:problem_of_S} can be safely relaxed into the following  non-constrained optimization problem:
\begin{align}\label{eqn:problem_of_S1}
\max_{\mathcal{S}\subseteq\mathcal{K}} \quad&R=\frac{\sum_{i\in \mathcal{S}}\omega_i r_i}{(1+d)^{|\mathcal{S}|-1}+\sum_{i\in \mathcal{S}}(a_i+b_i\gamma_i)r_i}.
\end{align}

However, with the non-convex term $(1+d)^{|\mathcal{S}|-1}$ in the denominator of $R$, Problem \eqref{eqn:problem_of_S1} is still challenging to solve. To tackle this difficulty, we fix $|\mathcal{S}|=m$, with $m=1,\cdots, K$. For a given $m$, since term $(1+d)^{|\mathcal{S}|-1}$ becomes a constant, Problem \eqref{eqn:problem_of_S1} is reduced to a \emph{mixed-integer linear fractional programming} problem. We solve Problem \eqref{eqn:problem_of_S1} by decomposing it into master-and-slave problems without loss of the optimality. The slave problem is determining the optimal offloading-user set using the Dinkelbach method \cite{NonLinear} for a given number of scheduled users $m$. Then the master problem is obtaining the optimal value of $m$, denoted as $m^*$,  by a simple search. The detailed solutions of the decomposed problems are presented in the sequel, which yield Algorithm~1 for computing the optimal scheduled-user set  $\mathcal{S}^*$.

\subsubsection{Optimal scheduling for a given number of scheduled users}
In this section, we solve Problem (P1) conditioned on a given  number of offloading users $m$, i.e., $|\mathcal{S}|= m$. To this end, we introduce a set of binary variables $\mathbf{x}=[x_1,\cdots,x_K]$, where $x_i=1$ means that user $i$ is scheduled (i.e., $i \in\mathcal{S}$),  and $x_i=0$ otherwise. Then, using the binary variables and conditioned on $|\mathcal{S}|= m$, Problem (P1) can be transformed into a combinatorial optimization problem as
\begin{align}
\text{(Slave Problem)} \quad \begin{aligned}
\max_{\mathbf{x}}~~&R_m=\frac{\sum_{i=1}^{K}x_i\omega_i r_i }{(1+d)^{m-1}+\sum_{i=1}^{K}x_i(a_i+b_i\gamma_i)r_i}=\frac{N(\mathbf{x})}{D(\mathbf{x})} \label{eqn:subproblem_of_S}\\
{\rm{s.t.}}~~&\sum_{i=1}^{K}x_i= m,  \quad x_i\in \{0, 1\},  \quad i=1,\cdots,K,
\end{aligned}
\end{align}
where $N(\mathbf{x})\triangleq\sum_{i=1}^{K}x_i\omega_i r_i$ and $D(\mathbf{x})\triangleq(1+d)^{m-1}+\sum_{i=1}^{K}x_i(a_i+b_i\gamma_i)r_i$. Let $R_m^*$ denotes the maximum conditional sum offloading rate from solving the slave problem. For ease of notation, we define the feasible set for Problem \eqref{eqn:subproblem_of_S} as
$\mathcal{F}_m\triangleq\{\mathbf{x}|\sum_{i=1}^{K}x_i= m \text{ and }x_i\in \{0,1\}, i=1,\cdots,K\}$. Since the objective function has a fractional form, the problem can be solved by \emph{non-linear fractional programming}. To this end, define a function $g(\cdot)$ of the conditional rate $R_m$ by an optimization problem in  a substrative form:
\begin{align}\label{eqn:subtractive_of_Sm}
g(R_m)=\max_{\mathbf{x}\in \mathcal{F}_m}\left[N(\mathbf{x})- D(\mathbf{x})R_m\right].
\end{align}
Let $\mathbf{x}^*$ be an optimal solution of Problem \eqref{eqn:subproblem_of_S}. We have the following property.
\begin{lemma}\label{lemma2}
The maximum conditional sum offloading rate $R_m^*$ that solves Problem \eqref{eqn:subproblem_of_S} can be achieved if and only if
\begin{align}\label{eqn:equivalent_of_Sm}
g(R^*_m)=\max_{\mathbf{x}\in \mathcal{F}_m}\left[N(\mathbf{x})-D(\mathbf{x})R^*_m\right]=N(\mathbf{x}^*)-D(\mathbf{x}^*)R^*_m=0.
\end{align}
\end{lemma}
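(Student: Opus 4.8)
The statement is the classical Dinkelbach characterization for nonlinear fractional programming, specialized to the finite feasible set $\mathcal{F}_m$. The plan is to exploit the single structural fact that $D(\mathbf{x})>0$ for every $\mathbf{x}\in\mathcal{F}_m$: indeed $(1+d)^{m-1}\geq 1$ since $d>0$, and the remaining sum $\sum_{i}x_i(a_i+b_i\gamma_i)r_i$ is nonnegative, so $D(\mathbf{x})$ is bounded strictly away from zero. This positivity is what lets me pass between the fractional objective $N(\mathbf{x})/D(\mathbf{x})$ and the subtractive objective $N(\mathbf{x})-D(\mathbf{x})R_m$ by multiplying or dividing by $D(\mathbf{x})$ without reversing any inequality. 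I would also record that $g(R_m)=\max_{\mathbf{x}\in\mathcal{F}_m}[N(\mathbf{x})-D(\mathbf{x})R_m]$ is a pointwise maximum of finitely many functions that are affine in $R_m$, each with strictly negative slope $-D(\mathbf{x})$; hence $g$ is convex, continuous and strictly decreasing, which guarantees it has a unique root and underpins the well-posedness of the Dinkelbach iteration used later.

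For the necessity direction (the ``only if''), I would start from an optimal solution $\mathbf{x}^*$ of Problem \eqref{eqn:subproblem_of_S}, so that $R_m^*=N(\mathbf{x}^*)/D(\mathbf{x}^*)$ and $N(\mathbf{x})/D(\mathbf{x})\leq R_m^*$ for all $\mathbf{x}\in\mathcal{F}_m$. Multiplying through by $D(\mathbf{x})>0$ yields $N(\mathbf{x})-D(\mathbf{x})R_m^*\leq 0$ for every feasible $\mathbf{x}$, whence $g(R_m^*)\leq 0$. Evaluating the same expression at $\mathbf{x}=\mathbf{x}^*$ gives $N(\mathbf{x}^*)-D(\mathbf{x}^*)R_m^*=0$, so the upper bound is attained and $g(R_m^*)=0$, realized by the very same $\mathbf{x}^*$.

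For the sufficiency direction (the ``if''), I would assume $g(R_m^*)=0$ with maximizer $\mathbf{x}^*$, i.e.\ $N(\mathbf{x}^*)-D(\mathbf{x}^*)R_m^*=0$ and $N(\mathbf{x})-D(\mathbf{x})R_m^*\leq 0$ for all feasible $\mathbf{x}$. Dividing the equality by $D(\mathbf{x}^*)>0$ shows that the value $R_m^*=N(\mathbf{x}^*)/D(\mathbf{x}^*)$ is actually attained at a feasible point, while dividing the inequality by $D(\mathbf{x})>0$ shows $N(\mathbf{x})/D(\mathbf{x})\leq R_m^*$ for all $\mathbf{x}\in\mathcal{F}_m$. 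Together these say that $R_m^*$ is the maximum of the fractional objective, which is exactly the claim.

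I do not expect a serious obstacle here, since the result is standard; the only points requiring care are the strict positivity of $D(\mathbf{x})$ (without which the inequality manipulations could flip sign) and the observation that a single $\mathbf{x}^*$ must simultaneously solve the fractional problem and attain the root of $g$. I would make both explicit so that the equality chain in \eqref{eqn:equivalent_of_Sm}, which asserts this common optimizer, is fully justified rather than merely asserted.
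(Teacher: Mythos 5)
Your proof is correct and takes essentially the same route as the paper's Appendix D: both directions hinge on the strict positivity of $D(\mathbf{x})$ to pass between the fractional objective $N(\mathbf{x})/D(\mathbf{x})$ and the subtractive objective $N(\mathbf{x})-D(\mathbf{x})R_m$, with the paper proving sufficiency explicitly and noting that necessity is the same argument reversed. Your additional observations---that $D(\mathbf{x})\geq 1$ and that $g$ is a pointwise maximum of finitely many strictly decreasing affine functions and hence has a unique root---simply make self-contained the existence/uniqueness facts the paper delegates to the Dinkelbach reference.
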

\begin{proof}
See Appendix D.
\end{proof}

Lemma \ref{lemma2} reveals the fact that the targeted fractional-form problem in \eqref{eqn:subproblem_of_S} shares the solution $\mathbf{x}^*$ as the subtractive-form problem in  \eqref{eqn:subtractive_of_Sm} when $R_m=R_m^*$. This provides an indirect method for solving the former using an iterative algorithm derived in the sequel, in which the derived condition $g(R_m) = 0$ is applied to checking the optimal convergence.

Based on Dinkelbach method \cite{NonLinear}, we propose an iterative algorithm to obtain $R_m^*$ in \eqref{eqn:equivalent_of_Sm}, thereby solving the slave problem in \eqref{eqn:subproblem_of_S}. Specifically, we concern the optimal solution to the subtractive-form Problem \eqref{eqn:subtractive_of_Sm} for a given $R_m$:
\begin{align}\label{eqn:subproblem_of_Xm}
g(R_m)&=\max_{\mathbf{x}\in \mathcal{F}_m}\left\{\sum_{i=1}^{K} x_ir_i \left[\omega_i-R_m(a_i+b_i\gamma_i)\right]-R_m(1+d)^{m-1}\right\}.
\end{align}

To facilitate exposition, we can rewrite the expression of $g(R_m)$ as
\begin{align}
g(R_m)&=\max_{\mathbf{x}\in \mathcal{F}_m}\left\{\sum_{i=1}^{K} \frac{x_i\left[\omega_i t_er_i(1+d)^{1-m}-R_m (a_i+b_i\gamma_i)t_er_i(1+d)^{1-m}\right]}{t_e(1+d)^{1-m}}-R_m(1+d)^{m-1}\right\}\nonumber\\
&=\max_{\mathbf{x}\in \mathcal{F}_m}\left\{ \frac{\sum_{i=1}^{K}x_i\psi_i(R_m)
}{t_e(1+d)^{1-m}} -R_m(1+d)^{m-1}\right\},\label{eqn:subproblem_of_Xm2}
\end{align}
where the last equality is obtained by substituting \eqref{eqn:tiu} and \eqref{eqn:tid} and defining
\begin{equation}\label{eqn:remark2}
\psi_i(R_m)=\omega_i\ell_i-R_m(t_i^u+t_i^d).
\end{equation}

\begin{remark}[Per-user Revenue]\label{Remark2}
The variable $\psi_i(R_m)$ can be interpreted as the net revenue of scheduling user $i$ as explained shortly. With the system offloading rate $R_m$, $R_m(t_i^u+t_i^d)$ represents  the expected number of user bits that can be computed successfully by offloading and result downloading over the duration  of $(t_i^u+t_i^d)$. By allocating the time to  user $i$ for offloading and downloading, the weighted number of actual  computed bits  is $w_i\ell_i$. Therefore, the difference between expected and actual bits, $\psi_i(R_m)$, measures the net system revenue obtained from scheduling user $i$.
\end{remark}

\begin{itemize}
\item{\bf Step 1}: Based on Remark \ref{Remark2}, the objective of the optimization in \eqref{eqn:subproblem_of_Xm} can be interpreted as one for maximizing the total system revenue. It follows that the optimal solution, denoted as $\mathbf{x}^{*}$, is to select $m$ users having the largest per-user revenue:
        \begin{align}\label{eqn:xi_op}
        x_{i}^*=
        \begin{cases}
        1, &\text{if $\psi_i(R_m)$ is one of the $m$ largest},\\
        0, & \text{otherwise},
        \end{cases}
        \end{align}
    with $i=1,\cdots,K$, where $\psi_i(R_m)$ is defined in \eqref{eqn:remark2}.

\item {\bf Step 2}: Given $\mathbf{x}^*$ computed in Step 1, the sum offloading rate $R_m$ can be updated as
\begin{align}\label{eqn:update_delta}
R_m=\frac{N(\mathbf{x}^*)}{D(\mathbf{x}^*)},
\end{align}
where $N(\cdot)$ and $D(\cdot)$ are given in \eqref{eqn:subproblem_of_S}. Then the per-user revenues $\{\psi_i(R_m)\}$ are updated using the new value of $R_m$.

\end{itemize}

Based on the Dinkelbach method, the above two steps are iterated till $g(R_m)=0$. Since this is the optimality condition according Lemma \ref{lemma2}, the convergence of the iteration yields the maximum $R_m^*$ and the corresponding $m$ scheduled users $\mathcal{S}^*(m) = \{i\mid x_i^* = 1\}$. It can be proved that the convergence rate is \emph{superlinear} (see e.g., \cite{NonLinear}).

\subsubsection{Finding the optimal number of scheduled users} With the slave problem in \eqref{eqn:subproblem_of_S} solved in the preceding sub-section, the master problem is to optimize $m$:
\begin{align}
\text{(Master Problem)}\qquad \max_{1\leq m\leq K}\quad R^*_m=\frac{\sum_{i\in \mathcal{S}^*(m)}\omega_i r_i}{(1+d)^{m-1}+\sum_{i\in \mathcal{S}^*(m)}(a_i+b_i\gamma_i)r_i}.\label{eqn:masterproblem}
\end{align}

To solve the problem, an intelligent search for $m^*$ over $\{1, 2, \cdots,  K\}$ seems to be difficult for the reason that $\{R_m^*\}$ is not a monotone sequence, which arises from the fact that scheduling more users increases multiplexing gain in parallel computing but causes stronger I/O interference and vice versa. Due to the lack of monotonicity, we resort to enumerating all possible values of $m$ from $1$ to $K$ to find $m^*$. The complexity of the exhaustive search is reasonable as it scales only linearly with the total number of users $K$.

\subsubsection{Overall Algorithm and Its Complexity}
The overall algorithm for solving the scheduling problem in \eqref{eqn:problem_of_S1} is shown in  Algorithm 1 which combines the iterative algorithm for solving the slave problem and the exhaustive search for solving in the master problem which are designed in the  preceding sub-sections.

The complexity of the overall algorithm is discussed as follows. The iterative algorithm for solving the slave problem using Dinkelbach  method has complexity upper bounded by $O(\log K)$ \cite{matsui1992analysis}.  Solving the master problem repeats at most $K$ runs of the iterative algorithms. Therefore, the worst-case complexity of the overall algorithm is $O(K\log K)$.

\begin{algorithm}[t] \label{alg:1}
\caption{Iterative User Scheduling Algorithm Based on Dinkelbach Method}
\begin{algorithmic}[1]
\FOR{ $m=1,\cdots$, $K$}
\STATE \textbf{initialize} $R_m=0$.
\REPEAT
\STATE For a given $R_m$, compute $\mathbf{x}^{*}$ according to \eqref{eqn:xi_op};
\STATE Update $R_m=\frac{N(\mathbf{x}^*)}{D(\mathbf{x}^*)}$;
\UNTIL $g(R_m)=0$.
\STATE Return $R^*_m=R_m$, $\mathbf{x}^*_m=\mathbf{x}^{*}$.
\ENDFOR
\STATE Return $m^*=\arg\max_{1\leq m \leq K} \left\{R^*_m\right\}$,  $R^*=R^*_{m^*}$ and $\mathcal{S}^*=\left\{ i~|~ x_{i}^*=1, i\in \mathcal{K}\right\}$.
\ENSURE $R^*$ and $\mathcal{S}^*$.
\end{algorithmic}
\end{algorithm}

\subsection{Special Cases}
Several special cases are considered to derive additional insights into the optimal multiuser offloading. For simplicity, the users' weights are assumed to be uniform, i.e., $\omega_i=1$, $\forall i\in \mathcal{K}$.

\subsubsection{Homogenous Users and Channels} Consider the special case where users are homogeneous in task types and channels such that their offloading parameter sets $\{r_i, a_i, b_i, \gamma_i\}$ are identical.
Then Problem \eqref{eqn:problem_of_S1} reduces to the simple problem of determining the number of offloading users. The sum offloading rate $R$ can be simplified as $R=\frac{mr}{(1+d)^{m-1}+mr(a+b\gamma)}$. By letting $\frac{d R}{d m}=0$,  the optimal number of scheduled users is obtained as
\begin{align}\label{eqn:opti_in_uniform_case}
m^*\approx \left[\frac{1}{\ln (1+d)}\right]^K_1,
\end{align}
where $\left[x\right]^{K}_{1}=\max\left\{\min\left\{x, K\right\}, 1\right\}$ restricts the $m^*$ in the range from $1$ to $K$. The result shows that for this special case, the optimal number of offloading users (or equivalently the optimal number of VMs in parallel computing) only depends on the I/O-interference parameter  $d$ in the parallel-computing model in \eqref{eqn:st_ell}.

\subsubsection{Homogeneous  Transmission Rates} Relaxing the assumption of homogeneous task types in the preceding case leads to the current case of homogeneous transmission rates due to channel homogeneity, corresponding to $\frac{1}{a_1+b_1\gamma_1}= \frac{1}{a_2+b_2\gamma_2}=\cdots = \frac{1}{a_K+b_K\gamma_K}$. Due to variation in task types, users have different computation-service rate specified by the parameter $\{r_i\}$ in the computation model in \eqref{eqn:st_ell}. We obtain for the current case the optimal offloading-user set as shown in the following proposition.
\begin{proposition}[Homogeneous Transmission Rates]\label{proposition4}
Consider the special case of homogeneous transmission rates $\frac{1}{a_1+b_1\gamma_1}= \frac{1}{a_2+b_2\gamma_2}=\cdots = \frac{1}{a_K+b_K\gamma_K}$. Without loss of generality, assume the computation-service rates   $r_1\geq r_2\geq...\geq r_K$. Let  $n_0$ denote the largest user index $n$ that satisfies $r_n\geq d\sum_{i=0}^{n-1}r_i$
with  $r_0=0$. Then, the optimal scheduled-user set $\mathcal{S}^*$ that solves Problem \eqref{eqn:problem_of_S} is given by
\begin{equation}
\mathcal{S}^*=\left\{i~| 1\leq i\leq n_0\right\}.
\end{equation}
 \end{proposition}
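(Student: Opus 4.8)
The plan is to exploit the two simplifying features of this special case—uniform weights ($\omega_i=1$) and a common transmission cost, which I write as $a_i+b_i\gamma_i\equiv c$—to collapse Problem \eqref{eqn:problem_of_S1} from a search over subsets to a one-dimensional search over the cardinality $m=|\mathcal{S}|$, and then to pin down the optimal $m$ by a monotonicity (unimodality) argument. First I would fix $|\mathcal{S}|=m$ and ask which $m$-subset maximizes $R$. Since $\omega_i=1$ and $a_i+b_i\gamma_i\equiv c$, the objective depends on $\mathcal{S}$ only through $m$ and $S\triangleq\sum_{i\in\mathcal{S}}r_i$, namely $R=S/[(1+d)^{m-1}+cS]$, whose derivative in $S$ equals $(1+d)^{m-1}/[(1+d)^{m-1}+cS]^2>0$. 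Hence $R$ is strictly increasing in $S$ for fixed $m$, and because $r_1\geq\cdots\geq r_K$, the $m$-subset of largest rate-sum is the prefix $\{1,\dots,m\}$. Thus the best set of any given size is a prefix, and the problem reduces to maximizing $R_m=S_m/[(1+d)^{m-1}+cS_m]$ over $m\in\{1,\dots,K\}$, with $S_m\triangleq\sum_{i=1}^m r_i$.

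Next I would analyze the discrete monotonicity of the sequence $\{R_m\}$. Cross-multiplying the inequality $R_{m+1}\geq R_m$ and cancelling the common term $cS_mS_{m+1}$ leaves $S_{m+1}(1+d)^{m-1}\geq S_m(1+d)^m$, i.e. $S_{m+1}\geq(1+d)S_m$, which after using $S_{m+1}-S_m=r_{m+1}$ becomes the clean step-up condition $r_{m+1}\geq dS_m$. Writing $n=m+1$ and recalling $r_0=0$ so that $S_{n-1}=\sum_{i=0}^{n-1}r_i$, this says $R_n\geq R_{n-1}$ holds precisely when the defining inequality $r_n\geq d\sum_{i=0}^{n-1}r_i$ of $n_0$ holds. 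Note it is automatic at $n=1$ (since $r_1\geq dr_0=0$), so $n_0\geq 1$ and the sequence begins by not decreasing.

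The crux is to upgrade this step-up condition into unimodality, so that the \emph{last} index at which the condition holds is the global maximizer. The key observation is that failure of the condition propagates: if $r_{n+1}<dS_n$, then using $r_{n+2}\leq r_{n+1}$ (descending order) together with $dS_{n+1}=dS_n+dr_{n+1}>dS_n>r_{n+1}\geq r_{n+2}$, one gets $r_{n+2}<dS_{n+1}$. Hence the indices where the condition fails form a contiguous tail, equivalently the indices where it holds form the prefix $\{1,\dots,n_0\}$. Consequently $R_1\leq\cdots\leq R_{n_0}$ while $R_{n_0}>R_{n_0+1}>\cdots>R_K$, so the maximum of $R_m$ is attained at $m=n_0$, and combining with the prefix structure from the first step gives $\mathcal{S}^*=\{1,\dots,n_0\}$.

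I expect the propagation step in the last paragraph to be the main obstacle, since it is the only place where the sorted order $r_1\geq\cdots\geq r_K$ is genuinely used and where the qualitative "increase-then-decrease" shape of $\{R_m\}$ is established; by contrast, the reduction to a prefix of fixed size and the derivation of the algebraic step-up condition are routine manipulations.
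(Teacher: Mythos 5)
Your proof is correct and follows essentially the same route as the paper: reduce to prefix sets of each cardinality (top-$m$ computation rates), derive the step condition $r_n \ge d\sum_{i=0}^{n-1} r_i$ by comparing consecutive terms of the resulting one-dimensional sequence, and conclude via unimodality that the largest index satisfying the condition is the global maximizer. The only notable difference is that the paper merely asserts the unimodality of its sequence (``it can be checked that the sequence has a monotone property''), whereas you actually prove it via the propagation argument that failure of the step condition at $n+1$ implies failure at $n+2$ using the descending order of the $r_i$'s --- a genuine gain in rigor at exactly the step you identified as the crux.
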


\begin{proof}
Please see Appendix E.
\end{proof}

\begin{remark}[To schedule or not?]
The computation-service rate $r_n$ can be seen as the gain of scheduling  user $n$ while $d\sum_{i=0}^{n-1}r_i$ represents the performance degradation imposed on the preceding  scheduled users (i.e., user $1$ to $(n-1)$). As long as $r_n\geq d\sum_{i=0}^{n-1}r_i$ is met, the gain of scheduling user $n$ outweighs its cost and thus it is worthwhile to schedule user $n$ for improving the sum offloading rate.
\end{remark}

\begin{remark}[Optimal Scheduling]
Proposition~\ref{proposition4} shows that the optimal scheduling policy is to select $n_0$ users with the best computing rates and the index $n_0$ can be obtained by adopting greedy approach that selects users in descending order of the computation-service rate (i.e., $r_i$) until the condition $r_n\geq d\sum_{i=0}^{n-1}r_i$ becomes invalid.
\end{remark}

\subsubsection{No I/O Interference}
Consider the ideal case without I/O interference, namely $d=0$ in \eqref{eqn:st_ell}. This case corresponds to sufficient I/O resources at the AP. We can show that for this case the optimal offloading-user set has a threshold based structure where the threshold is determined by transmission rates. The details are given in the following proposition.

\begin{proposition}\label{proposition5}
Without loss of generality, assume the transmission rates follow the descending order: $\frac{1}{a_1+b_1\gamma_1}> \cdots > \frac{1}{a_K+b_K\gamma_K}$. Let $m_0$, with $1\leq m_0\leq K$, denote the largest user index that meets
\begin{align}\label{eqn:ordering}
\frac{1}{a_{m_0}+b_{m_0}\gamma_{m_0}}\geq \frac{\sum_{i=1}^{m_0}r_i}{1+\sum_{i=1}^{m_0}(a_i+b_i\gamma_i)r_i}.
\end{align}
The optimal scheduled-user set $\mathcal{S}^*$ that solves Problem  \eqref{eqn:problem_of_S} is given by
\begin{equation}
\mathcal{S}^*=\left\{i~| 1\leq i\leq m_0\right\}.\label{Eq:OpScheduling}
\end{equation}
\end{proposition}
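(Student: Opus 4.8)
The plan is to exploit the structure that emerges when $d=0$ and $\omega_i=1$, under which the objective of Problem~\eqref{eqn:problem_of_S} collapses to $R(\mathcal{S}) = \frac{\sum_{i\in\mathcal{S}} r_i}{1 + \sum_{i\in\mathcal{S}}(a_i+b_i\gamma_i) r_i}$ to be maximized over all $\mathcal{S}\subseteq\mathcal{K}$ (by Proposition~\ref{proposition2} the constraint \eqref{eqn:S_pro1} may be dropped). Writing $\beta_i \triangleq a_i+b_i\gamma_i$, so that $1/\beta_i$ is the transmission rate, the entire argument rests on one mediant observation: for any $j\notin\mathcal{S}$,
$$R(\mathcal{S}\cup\{j\}) = \frac{\sum_{i\in\mathcal{S}}r_i + r_j}{1+\sum_{i\in\mathcal{S}}\beta_i r_i + \beta_j r_j}$$
is the mediant of $R(\mathcal{S})$ and $r_j/(\beta_j r_j)=1/\beta_j$, hence lies strictly between them. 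Consequently, appending user $j$ strictly increases $R$ iff $1/\beta_j > R(\mathcal{S})$, and by the symmetric computation, deleting a scheduled user $j$ strictly increases $R$ iff $1/\beta_j < R(\mathcal{S})$. I would verify these two facts by elementary cross-multiplication.

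First I would use this to show that any optimal set $\mathcal{S}^*$ is a \emph{prefix} $\{1,\dots,m\}$ of the descending-rate ordering. Optimality forbids both profitable additions and profitable deletions, so every scheduled user obeys $1/\beta_j \geq R^*$ and every unscheduled user obeys $1/\beta_j \leq R^*$, where $R^*=R(\mathcal{S}^*)$. If $\mathcal{S}^*$ were not a prefix, there would exist indices $i<j$ with $i\notin\mathcal{S}^*$ and $j\in\mathcal{S}^*$; the strict ordering $1/\beta_i > 1/\beta_j$ combined with $1/\beta_i \leq R^* \leq 1/\beta_j$ yields $R^* > R^*$, a contradiction. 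Hence $\mathcal{S}^*=\{1,\dots,m\}$ for some $m$, and only the best $m$ remains to be found.

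Next I would analyze the prefix sequence $R_m \triangleq R(\{1,\dots,m\})$, with $R_0=0$. A short cross-multiplication gives the key equivalence $\tfrac{1}{\beta_m}\geq R_m \iff \tfrac{1}{\beta_m}\geq R_{m-1} \iff R_m \geq R_{m-1}$, so the defining inequality \eqref{eqn:ordering} of $m_0$ is exactly the statement that appending user $m$ does not decrease the rate. The main work is the unimodality (run) property: once $R_{m+1}<R_m$, i.e., $1/\beta_{m+1}<R_m$, the mediant property forces $R_{m+1} > 1/\beta_{m+1} > 1/\beta_{m+2}$ (here the \emph{strict} decrease of transmission rates is essential), whence $1/\beta_{m+2}<R_{m+1}$ and the rate keeps falling; by induction $R_m$ rises for $m\leq m_0$ and decreases for $m>m_0$. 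This is the step I expect to be the main obstacle, as it is precisely what excludes a later, second peak and pins the maximum at the threshold $m_0$.

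Finally I would combine the two parts: the optimum is a prefix, and among prefixes $R_m$ is maximized at $m=m_0$ (note that $m=1$ always satisfies \eqref{eqn:ordering} since $1/\beta_1 \geq r_1/(1+\beta_1 r_1)$, so $1\leq m_0\leq K$ is well defined). Therefore $\mathcal{S}^*=\{i \mid 1\leq i\leq m_0\}$, establishing \eqref{Eq:OpScheduling}.
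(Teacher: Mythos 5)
Your proof is correct, and it is genuinely more than what the paper provides: the paper omits the proof of this proposition entirely, saying only that it is ``similar to Proposition~\ref{proposition4} and thus omitted.'' The two routes differ in a substantive way. In the paper's proof of Proposition~\ref{proposition4} (Appendix E), the reduction to a sorted prefix comes for free: with homogeneous transmission rates the objective depends on $\mathcal{S}$ only through $|\mathcal{S}|$ and $\sum_{i\in\mathcal{S}}r_i$, so for each fixed cardinality the top-$n$ users by $r_i$ are trivially optimal and only a one-dimensional sequence in $n$ remains to be analyzed. Here that shortcut is unavailable: writing $\beta_i\triangleq a_i+b_i\gamma_i$ as you do, the objective depends on both $\sum_{i\in\mathcal{S}}r_i$ and $\sum_{i\in\mathcal{S}}\beta_i r_i$, and for a fixed cardinality the best set need \emph{not} be the top-$m$ users by transmission rate (a user with slightly smaller $1/\beta_i$ but much larger $r_i$ can beat a faster user). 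So the prefix structure must be established at the global optimum, which is exactly what your exchange argument does: local optimality plus the mediant inequality (the paper's Lemma~\ref{lemma3}, the same tool it uses for Propositions~\ref{proposition1} and~\ref{proposition2}) forces $1/\beta_j\geq R^*$ for every scheduled user and $1/\beta_j\leq R^*$ for every unscheduled one, and the strict ordering then yields a prefix. Your induction showing that once $R_{m+1}<R_m$ the prefix rate keeps falling is the correct analogue of the monotone-threshold argument of Appendix E (where $r_n$ decreases against $d\sum_{i=0}^{n-1}r_i$ increasing), and your equivalence $1/\beta_m\geq R_m \iff R_m\geq R_{m-1}$ cleanly explains why \eqref{eqn:ordering} is a \emph{sufficient} optimality condition in this special case while \eqref{eqn:S_pro1} is only necessary in general---a point the paper asserts without justification. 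In short, your argument fills a real gap behind the ``proof omitted'' claim rather than merely replicating Appendix E.
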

The proof is similar to Proposition \ref{proposition4} and thus omitted. One can observe that condition \eqref{eqn:ordering} is a simplified version of \eqref{eqn:S_pro1} by setting $d=0$. However, it is important to note that the former provides a sufficient condition of the optimal offloading-user set in \eqref{Eq:OpScheduling} for the current special case while the latter only provides a necessary condition for optimal scheduling in the general case. Last, similar to the preceding special case, the index $m_0$ can be obtained via a greedy method.

\section{Sum Mobile Energy Minimization}
In this section, we attempt to solve Problem (P2) of minimizing sum-energy consumption over mobiles in the multiuser-offloading process. First, the feasibility region of the problem is analyzed. Then Problem (P2) is converted into an equivalent problem, which facilitates the design of an algorithm for finding a sub-optimal solution.

\subsection{Feasibility Analysis}
The feasible region of Problem (P2) is non-empty if the latency constraint $T$ is larger than  the minimum required time for computing all offloaded  tasks, denoted as $T_\text{min}$. To find $T_\text{min}$ is equivalent to solving the following \emph{latency minimization problem}:
\begin{align*}
(\text{P3}):~ \min_{\mathcal{S}\subseteq\mathcal{K}, {\{\ell_i\}}, t_e, T} \quad&~ T  \\
{\rm{s.t.}}\quad~~~&\sum_{i\in \mathcal{S}}\ell_i\left(a_i+b_i\gamma_i\right)+t_e\leq T,  \\
&L_i^{\text{min}}\leq \ell_i \leq \min\left\{L_i, t_e r_i \left(1+d\right)^{1-|\mathcal{S}|}\right\},\quad \forall i \in \mathcal{S}.
\end{align*}
The solution of Problem (P3) can be obtained as shown in the following proposition.
\begin{proposition}\label{proposition6}
The minimum computation time $T_{\text{min}}$ that solves Problem (P3) is the root of the following equation with the variable $T$:
\begin{align}\label{eqn:ECM_feasi2}
\sum_{i\in \mathcal{K}}\left(a_i+ b_i \gamma_i\right)L_i^{\emph{min}} +\max_{i\in \mathcal{K}}\left\{\frac{L_i^{\emph{min}}}{r_i(1+d)^{1-N(T)}}\right\}=T,
\end{align}
where $N(T)\triangleq\sum_{i\in \mathcal{K}}\mathds{1}\{L_i^{\emph{min}}>0\}$ with $L_i^{\emph{min}}$ being the minimum offloaded data size under the latency constraint, and $\mathds{1}\{\cdot\}$ is an indicator function that outputs $1$ when an  event occurs and $0$ otherwise.
\end{proposition}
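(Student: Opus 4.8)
The plan is to characterize the structure of the optimal solution of (P3) and thereby reduce the latency minimization to a one-dimensional fixed-point equation in $T$. The crucial observation is that the lower bound $L_i^{\text{min}}=\left[L_i-\frac{Tf_i}{c_i}\right]^+$ appearing in \eqref{eqn:st_S2} already encodes which users are \emph{forced} to offload: $L_i^{\text{min}}>0$ holds precisely when local-only computation would miss the deadline (that is, $\frac{c_iL_i}{f_i}>T$), so such a user must offload at least $L_i^{\text{min}}$ bits and hence must belong to $\mathcal{S}$, whereas a user with $L_i^{\text{min}}=0$ can finish entirely within $T$ and gains nothing by offloading.

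Building on this, I would establish three facts in order. First, the optimal scheduled set is exactly $\mathcal{S}^*=\{i\in\mathcal{K}\mid L_i^{\text{min}}>0\}$, so that $|\mathcal{S}^*|=N(T)$; including any user with $L_i^{\text{min}}=0$ only enlarges $|\mathcal{S}|$, which amplifies the interference factor $(1+d)^{1-|\mathcal{S}|}$ and thus inflates the required $t_e$ for every scheduled user, while reducing no transmission time. Second, conditioned on this set, the optimal offloaded sizes are $\ell_i^*=L_i^{\text{min}}$: since the objective of (P3) is $T$ rather than energy, offloading any $\ell_i>L_i^{\text{min}}$ only increases the transmission term $\sum_{i}(a_i+b_i\gamma_i)\ell_i$ and can only tighten the bound $\ell_i\leq t_er_i(1+d)^{1-|\mathcal{S}|}$, enlarging the feasible $t_e$. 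Third, with these choices the binding upper bounds yield the smallest feasible computing time $t_e^*=\max_{i\in\mathcal{K}}\frac{L_i^{\text{min}}}{r_i(1+d)^{1-N(T)}}$, where the terms for $i\notin\mathcal{S}^*$ vanish so that the maximum over $\mathcal{S}^*$ coincides with the maximum over $\mathcal{K}$. Substituting $\ell_i^*$ and $t_e^*$ into the latency constraint \eqref{eqn:st_T2}, which must hold with equality at the latency optimum, produces exactly \eqref{eqn:ECM_feasi2}.

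The main obstacle is the implicit, self-referential character of \eqref{eqn:ECM_feasi2}: both $L_i^{\text{min}}$ and the integer count $N(T)$ depend on the very variable $T$ being solved for, so one cannot simply substitute. To close the argument I would define $\Phi(T):=\sum_{i\in\mathcal{K}}(a_i+b_i\gamma_i)L_i^{\text{min}}+\max_{i\in\mathcal{K}}\frac{L_i^{\text{min}}}{r_i(1+d)^{1-N(T)}}$ and show that a latency budget $T$ is feasible for (P3) if and only if $T\geq\Phi(T)$, so that $T_{\text{min}}$ is the smallest such $T$. Because each $L_i^{\text{min}}(T)$ is non-increasing (piecewise linear) in $T$ and $N(T)$ is a non-increasing integer step function, $\Phi$ is non-increasing, whereas the left-hand side $T$ is strictly increasing; hence $T-\Phi(T)$ is strictly increasing and can vanish at only one point, which is the root of \eqref{eqn:ECM_feasi2} and equals $T_{\text{min}}$. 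The delicate point is the behaviour at the jumps of $N(T)$, where a user's $L_i^{\text{min}}$ reaches zero and $\Phi$ may be discontinuous; there I would verify that the strict monotonicity of $T-\Phi(T)$ still isolates a single crossing, so that $T_{\text{min}}=\inf\{T:T\geq\Phi(T)\}$ remains well defined and coincides with the stated root.
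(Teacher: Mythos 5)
Your proposal is correct and follows essentially the same route as the paper's proof in Appendix F: both arguments reduce (P3) by taking $\ell_i=L_i^{\text{min}}$ and the minimum computing time $t_e=\max_{i\in\mathcal{K}}\left\{L_i^{\text{min}}/\bigl(r_i(1+d)^{1-N(T)}\bigr)\right\}$, identify the offloading set with $\{i: L_i^{\text{min}}>0\}$, and use the monotone decrease of the resulting left-hand side in $T$ against the increasing right-hand side to place the optimum at equality. Your fixed-point formulation ($T$ feasible if and only if $T\geq\Phi(T)$) and your explicit flagging of the discontinuities of $\Phi$ at the jumps of $N(T)$ are somewhat more careful than the paper's one-line ``easily observed'' monotonicity claim, but they formalize the same argument rather than constitute a different one.
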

\begin{proof}
Please see Appendix F.
\end{proof}
The left hand side of \eqref{eqn:ECM_feasi2} represents the time required for completing the minimum offloaded data with sizes of $\left\{L_i^{\text{min}}\right\}$ with  $L_i^{\text{min}}=\left[L_i-\frac{Tf_i}{c_i}\right]^+$ [see \eqref{eqn:st_ell2}]. In this expression, the first term is the time used for offloading and downloading, and the second term is the time for parallel computing, which is dominated by the task with the maximum execution time. Note that minimum offloading with sizes  $\left\{L_i^{\text{min}}\right\}$ requires full utilization of   local-computing capacities, i.e., the local computing time is at its maximum extended to $t_i^{\text{loc}}=T$, for all $i$. It follows that  $T_\text{min}$ occurs when the time for local computing and MEC are both equal to $T$. Since the left hand side of \eqref{eqn:ECM_feasi2} is non-differentiable but monotonically decreasing with $T$, $T_\text{min}$ can be easily found by simple bisection search. Then $T\geq T_{\min}$ yields the condition of nonempty feasibility region for Problem (P2).

\subsection{Problem Transformation}
Problem (P2) can be transformed into an equivalent problem whose solution facilitates scheduling design. A close observation of Problem (P2) reveals  that, when the $i$-th minimum offloaded data size $L_i^{\text{min}}>0$, it indicates that task $i$  cannot  be computed locally within the duration $T$ and a fraction  with at least  $L_i^{\text{min}}$ bits has to be offloaded. Therefore, $L_i^{\text{min}}>0$ means that user $i$ needs offloading. On the other hand, the condition $\theta_i>0$ corresponds to the case where task offloading consumes more energy than local computing. For the purpose of energy-saving, users with $\theta_i>0$ should  offload the minimum of  $L_i^{\text{min}}$ bits. Based on if none, one, or both   of  the above two conditions holds,  we can divide $K$ users into four disjoint subsets as follows:
\begin{align*}
\mathcal{M}_0 &=\{i~ |~ L_i^\text{min}>0\text{ and }~ \theta_i>0\}, \quad ~~ \mathcal{N}_0 =\{i~ |~ L_i^\text{min}=0\text{ and }~ \theta_i>0\},\\
\mathcal{M}_1 &=\{i~ |~ L_i^\text{min}>0\text{ and }~ \theta_i<0\},\quad ~~ \mathcal{N}_1 =\{i~ |~ L_i^\text{min}=0\text{ and }~ \theta_i<0\}.
\end{align*}
As a result, $\mathcal{M}_0$ and $\mathcal{M}_1$ are the sets of users requiring offloading under the latency constraint. To save energy, users in $\mathcal{M}_0$ should  offload minimum data $\ell_i^*= L_i^{\text{min}}$  while users in $\mathcal{N}_0$ should perform local computing only (i.e., $\ell_i^*=0$). The other sets  $\mathcal{M}_1$ and $\mathcal{N}_1$ are the sets of users who favour offloading since it is more energy-efficient than local computing. Furthermore,  users in $\mathcal{M}_1$ have to offload at least  $L_i^\text{min}$ bits under the latency constraint. In summary, for sum energy minimization, the optimal scheduling policy should schedule all users in $\mathcal{M}_0$ with minimum offloading, all in $\mathcal{M}_1$ to offload at least $L_i^{\text{min}}$ bits, none from $\mathcal{N}_0$, a subset of users from $\mathcal{N}_1$ with nonzero offloading.

Based on the above discussion and by denoting an arbitrary subset of $\mathcal{N}_1$ as  $\mathcal{S}_1$,  Problem (P2)  can be transformed into the following  equivalent problem:
\begin{align}
(\text{P4}):~\min_{\mathcal{S}_1\subseteq \mathcal{N}_1, \{\ell_i\}, t_e, } &\quad \sum_{i\in \mathcal{S}_1\cup \mathcal{M}_1} \theta_i \ell_i \label{eqn:P2}\\
{\rm{s.t.}} &\quad \sum_{i\in \mathcal{S}_1\cup \mathcal{M}_1}\ell_i \left(a_i+ b_i \gamma_i\right)+t_e\leq \widetilde{T},\label{eqn:Const}\\
&\quad L_i^\text{min} \leq \ell_i \leq \min \left\{L_i, t_e r_i \left(1+d\right)^{1-|\mathcal{M}|-|\mathcal{S}_1|}\right\},\quad \forall i \in \mathcal{M}_1  \label{eqn:ECM_ell_M1}\\
&\quad 0\leq \ell_i \leq \min  \left\{L_i, t_e r_i \left(1+d\right)^{1-|\mathcal{M}|-|\mathcal{S}_1|}\right\},\quad \forall i\in \mathcal{S}_1 \label{eqn:ECM_ell_N1}\\
&\quad t_e \geq \max_{i\in \mathcal{M}_0}\left\{\frac{L_i^\text{min}}{r_i(1+d)^{1-|\mathcal{M}|-|\mathcal{S}_1|}}\right\}. \label{eqn:te_M0}
\end{align}
where $\widetilde{T}\triangleq T-\sum_{i\in \mathcal{M}_0} L_i^\text{min}(a_i+ b_i \gamma_i)$ and $|\mathcal{M}|\triangleq|\mathcal{M}_0|+|\mathcal{M}_1|$.  Note that $\{\theta_i <0 |  i\in \mathcal{S}_1 \cup \mathcal{M}_1\}$, corresponding to the fact  that offloading saves mobile energy.  It follows that the minimization in Problem (P4) attempts to maximize offloading for users from $\mathcal{N}_1$ and $\mathcal{M}_1$. Last, given $\mathcal{S}_1$, the total number of offloading users is obtained as $|\mathcal{M}_0| + |\mathcal{M}_1| + |\mathcal{S}_1|$. The constraint in \eqref{eqn:te_M0} is derived from the constraint \eqref{eqn:st_ell2} for $i\in \mathcal{M}_0$, which ensures that $t_e$ is no less than the minimum required time for computing any  task in $\mathcal{M}_0$.

Problem (P4) is a mixed integer programming problem. Its solution potentially  requires an exhaustive search over all possible user subsets $\{\mathcal{S}_1\}$, resulting in complexity exponentially increasing with number of  users. For this reason, we find a close-to-optimal solution by developing a  low-complexity suboptimal algorithm in the next subsection.

\subsection{Suboptimal Scheduling Algorithm}
The tractability of algorithmic design relies on  applying a set of sub-optimal rules on offloading so as to ensure that the latency requirement can be met. To this end, we make the observation that, without considering the latency constraint in  \eqref{eqn:Const}, the optimal offloading policy for solving  Problem (P4) is one that all the users in $\mathcal{M}_1$ and $\mathcal{N}_1$ offload data with  maximum sizes, i.e., $\{\ell_i=L_i | i\in \mathcal{M}_1\cup \mathcal{N}_1\}$. To rein in the latency,
 the following rules are proposed to simplify the design problem:
\begin{enumerate}
\item{The users in $\mathcal{N}_1$ are constrained to adopt binary offloading scheme, i.e., $\ell_i=L_i$ if $i\in \mathcal{S}_1$ and $\ell_i=0$ if $i\in\mathcal{N}_1\backslash \mathcal{S}_1$. The incentive of considering binary policy is to make the process of offloading-decision making as simple and efficient as possible.}
\item{If $\mathcal{S}_1\neq \emptyset$, the users in $\mathcal{M}_1$ offload their data with the maximum sizes $L_i$. The rule is motivated by the observation that increasing the number of simultaneous VMs incurs higher server's operational cost. Applying this rule can maximize the utilization of each subscribed VM resource so as to reduce the number of VMs for minimizing the cost, while ensuring a high system performance.}
\item{By applying the rules in 1) and 2), if the latency requirement \eqref{eqn:Const} is violated, we first remove the users in $\mathcal{S}_1$ for reducing  the total latency. When $\mathcal{S}_1=\emptyset$ and the reduced total latency still violates the requirement, we proceed  to reduce the offloading bits from  users in $\mathcal{M}_1$ for further latency reduction. }
\end{enumerate}

Based on the above rules, a sub-optimal algorithm for solving Problem (P4) is designed as follows. To this end, we define a function of total offloaded-computation latency for scheduled users as follows
\begin{align}\label{eqn:total_delay}
\mathcal{D}\left(\mathcal{S}_1\right)=\sum_{i\in  \mathcal{S}_1\cup \mathcal{M}_1} L_i (a_i+b_i\gamma_i)+\sum_{i\in \mathcal{M}_0} L_i^\text{min} (a_i+b_i\gamma_i)+t_e(\mathcal{S}_1),
\end{align}
with
\begin{align}\label{eqn:te_SN1}
t_e\left(\mathcal{S}_1\right)=\max\left\{\max_{i\in \mathcal{S}_1\cup \mathcal{M}_1}\left\{\frac{L_i}{r_i}\right\}, \max_{i\in \mathcal{M}_0}\left\{\frac{L_i^\text{min}}{r_i}\right\}\right\} (1+d)^{|\mathcal{M}|+|\mathcal{S}_1|-1},
\end{align}
Since all users in $\mathcal{M}_1\cup \mathcal{M}_0$ should be all scheduled as discussed earlier, the function has only one variable  $\mathcal{S}_1$.  The variable $t_e\left(\mathcal{S}_1\right)$ in \eqref{eqn:te_SN1} represents the corresponding minimum parallel computing time, which is derived using \eqref{eqn:ECM_ell_M1} to \eqref{eqn:te_M0}.

\begin{algorithm}[t] \label{alg:2}
\caption{Suboptimal Algorithm for solving Problem (P4)}
\begin{algorithmic}[1]
\REQUIRE $T\geq T_\text{min}$.
\STATE Set $\{\ell_i=0 |  i\in \mathcal{N}_0\}$ and $\{\ell_i=L_i^\text{min} | i\in \mathcal{M}_0\}$.
\STATE \textbf{initialize} $\{\ell_i=L_i | i\in \mathcal{M}_1\cup\mathcal{N}_1\}$.
\IF{$ T \in \left[\mathcal{D}(\mathcal{N}_1), +\infty\right)$}
\STATE Return $\mathcal{S}=\mathcal{M}_0\cup\mathcal{M}_1\cup\mathcal{N}_1$, $\{\ell_i\}$, and $t_e=t_e(\mathcal{N}_1)$.
\ELSIF{$T \in \left[\mathcal{D}(\emptyset), \mathcal{D}(\mathcal{N}_1)\right)$}
\STATE \textbf{initialize} $\mathcal{S}_1=\mathcal{N}_1$.
\REPEAT
\STATE Let $j=\arg\min_{i\in\mathcal{S}_1}\left\{\frac{-\theta_i}{a_i+b_i\gamma_i}\right\}$. Update $\mathcal{S}_1=\mathcal{S}_1\backslash\{j\}$ and $\ell_j=0$.
\UNTIL $\mathcal{D}(\mathcal{S}_1)\leq T$.
\STATE Return $\mathcal{S}=\mathcal{M}_0\cup\mathcal{M}_1\cup\mathcal{S}_1$, $\{\ell_i\}$, and $t_e=t_e(\mathcal{S}_1)$.
\ELSE
\STATE Solve Problem \eqref{eqn:LP_M1} and obtain its optimal solution $\{\ell_i^\prime\}_{i\in \mathcal{M}_1}$ and $t_e^\prime$.
\STATE Set $\{\ell_i=0 |  i\in \mathcal{N}_1\}$, $\{\ell_i=\ell_i^\prime| i\in \mathcal{M}_1\}$, and $t_e=t_e^\prime$. \STATE Return $\mathcal{S}=\mathcal{M}_0\cup\mathcal{M}_1$, $\{\ell_i\}$, and $t_e$.
\ENDIF
\ENSURE $\mathcal{S}$, $\{\ell_i\}$, $t_e$.
\end{algorithmic}
\end{algorithm}

By applying the aforementioned rules and using the definition in \eqref{eqn:total_delay}, the suboptimal algorithm for solving Problem (P4) is presented in Algorithm 2 with the key steps described as follows. Specifically, we consider three scenarios of the latency constraint $T$. First,  if $T \in \left[\mathcal{D}(\mathcal{N}_1), +\infty\right)$, the latency constraint in \eqref{eqn:Const} is met by the offloading policy from Steps 1-2, which is thus optimal without the need of further modification. Second, if $T \in \left[\mathcal{D}(\emptyset), \mathcal{D}(\mathcal{N}_1)\right)$, based on the said rules, we remove the members in $\mathcal{S}_1$ incrementally to reduce the total latency by the following iterative procedure. We initialize $\mathcal{S}_1$ as $\mathcal{N}_1$ and continue to remove users from $\mathcal{S}_1$ until $\mathcal{D}\left(\mathcal{S}_1\right)\leq T$ is met. In each removal, we take away one user using greedy strategy, i.e., selecting the user in $\mathcal{S}_1$ with the minimum energy efficiency (i.e., $\frac{-\theta_i}{a_i+b_i\gamma_i}$). Last,  if the given $T$ is still smaller than  $\mathcal{D}(\emptyset)$, we begin to reduce  the offloaded bits of users in $\mathcal{M}_1$ for further  latency reduction. Fortunately, when $\mathcal{S}_1=\emptyset$, Problem (P4) is reduced to the problem of determining $\ell_i$'s in $\mathcal{M}_1$ as follow:
\begin{align}\label{eqn:LP_M1}
\min_{\{\ell_i\}, ~t_e} &\quad \sum_{i\in \mathcal{M}_1} \theta_i \ell_i \\
{\rm{s.t.}} &\quad \sum_{i\in \mathcal{M}_1}\ell_i \left(a_i+ b_i \gamma_i\right)+t_e\leq \widetilde{T}, \nonumber \\
&\quad L_i^\text{min} \leq \ell_i \leq \min \left\{L_i, t_e r_i \left(1+d\right)^{1-|\mathcal{M}|}\right\},\quad \forall i \in \mathcal{M}_1, \nonumber \\
&\quad t_e \geq \max_{i\in \mathcal{M}_0}\left\{\frac{L_i^\text{min}}{r_i(1+d)^{1-|\mathcal{M}|}}\right\}.\nonumber
\end{align}
which is an LP problem and can be solved efficiently by the LP solver.

The complexity of Algorithm 2 is dominated by solving the LP problem in Step 12, which is $\mathcal{O}((|\mathcal{M}_1|)^{3.5})$.

\section{Simulation Results}
In this section, we provide simulation results to evaluate the proposed algorithms. The parameters are set as follows, unless otherwise stated. We set $T=35$ ms and $\omega_i=1$, $\forall i\in \mathcal{K}$. For each user $i$, we set the uplink and downlink transmission rates $a_i^{-1}$ and $b_i^{-1}$ uniformly distributed in $[100, 150]$ Mbps and $[150, 200]$ Mbps, respectively. The computation-service rate $r_i$ follows uniform distribution over $\left[1\times 10^7, 2\times 10^7\right]$ bits/sec. In addition, we set the ratio of output/input data  $\gamma_i=10^{-x}$, where $x$ is uniformly distributed over $\left[0.5, 1.5\right]$.  All random variables are independent for each user and the simulation results are obtained by averaging over 500 realizations.

\subsection{Offloading Rate Maximization}

\begin{figure}[t]
\centering
\begin{minipage}[t]{0.48\textwidth}
\centering
\includegraphics[width=\textwidth]{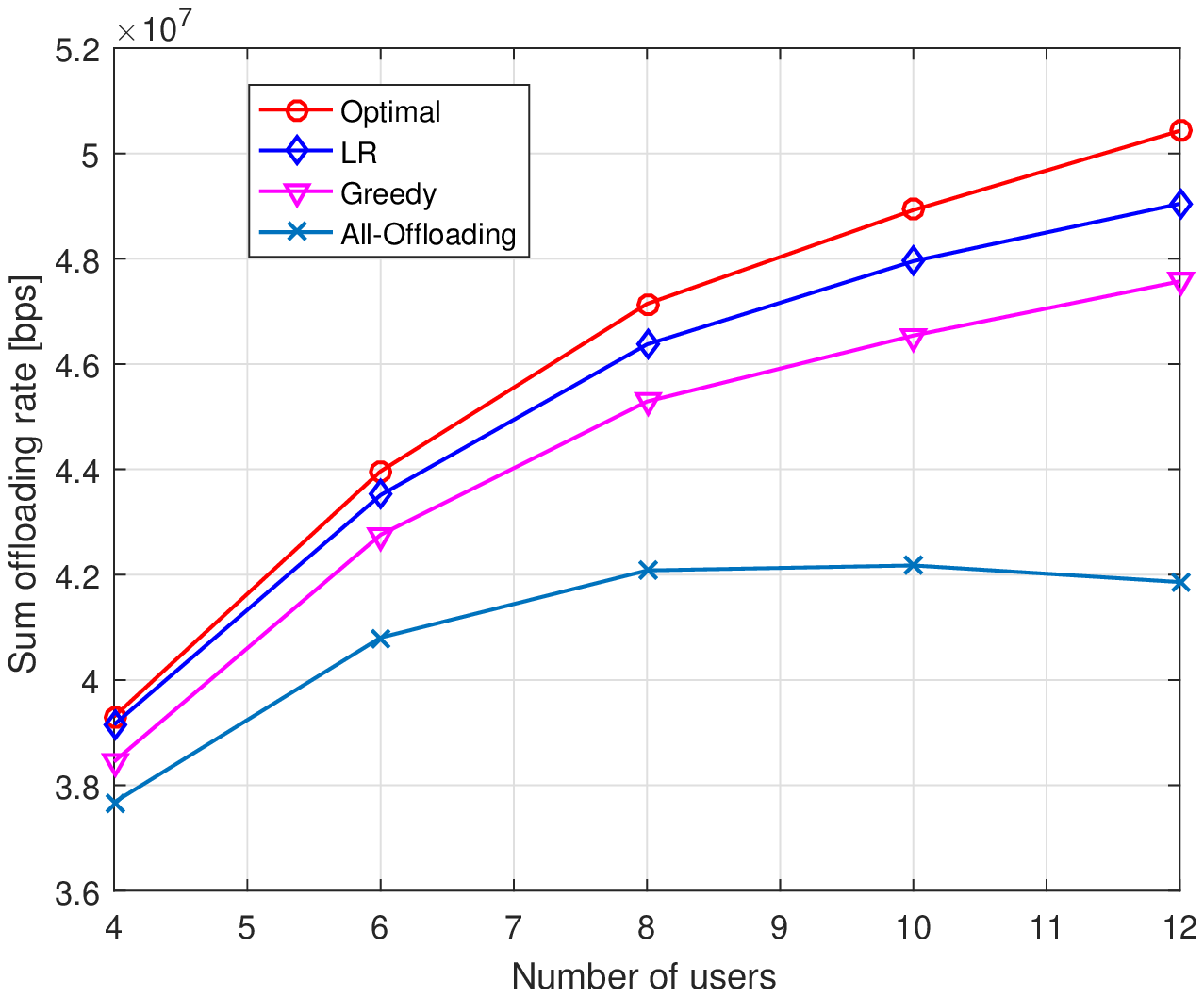}
\vspace*{-10mm}
\caption{Sum offloading rate versus $K$.}\label{fig:f3}
\end{minipage}
\begin{minipage}[t]{0.48\textwidth}
\centering
\includegraphics[width=\textwidth]{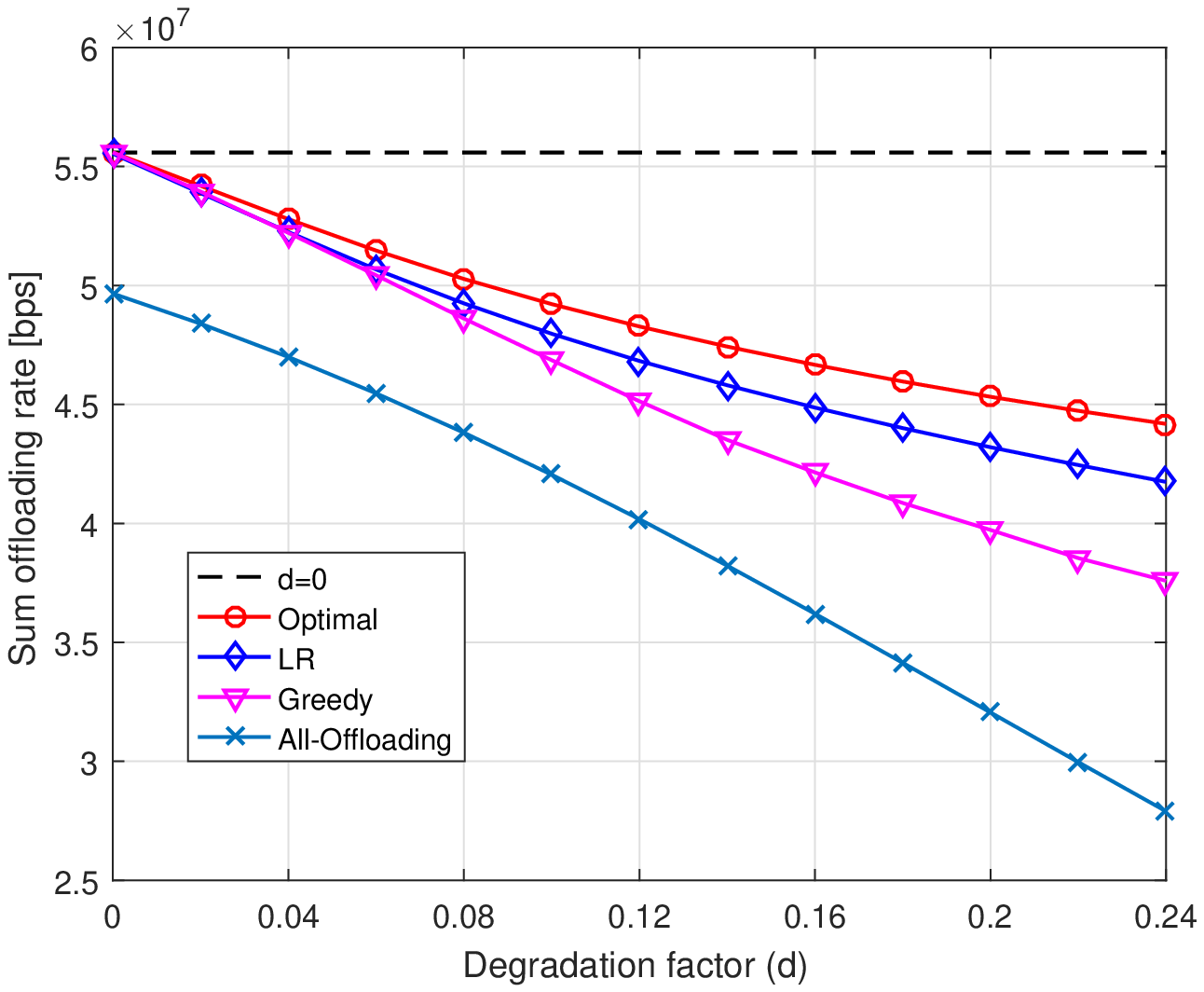}
\vspace*{-10mm}
\caption{Sum offloading rate versus $d$.}\label{fig:f4}
\end{minipage}
\end{figure}

For performance comparison, we introduce three benchmark algorithms in the following.
\begin{enumerate}
\item \emph{All-Offloading:} All the users are scheduled to offload, i.e., $\mathcal{S}=\mathcal{K}$.
\item \emph{Greedy:} $\mathcal{S}$ is obtained through a greedy method, i.e., selecting users in the descending order of the transmission rate (i.e., $\frac{\omega_i}{a_i+b_i\gamma_i}$) until condition \eqref{eqn:S_pro1} is invalid.
\item \emph{Linear Programming Relaxation (LR):} $\mathcal{S}$ is obtained by solving $K$ slave problems in \eqref{eqn:subproblem_of_S} using linear programming relaxation \cite{LPR}.
\end{enumerate}
Note that the three benchmarks are used to find the offloading-user set, then the rest problem is reduced to an LP that can be solved efficiently.

In Fig. \ref{fig:f3}, we compare the sum offloading rate performance of different algorithms when the number of users $K$ varies from $4$ to $12$, where $d$ is set as $0.1$. First, we can see that the sum offloading rate is increasing with $K$ for the optimal, LR and greedy algorithms, while for the scheme that all users offload, it grows slowly when $K\leq 10$ and begins to decrease afterwards. This is because the former three algorithms have more flexible user-scheduling schemes to balance the degradation impact caused by I/O interference and thus have superior system performance. In contrast, the last algorithm with no control on the number of offloading users, will suffer more severe performance degradation as $K$ increases. Besides, it can be observed that the optimal algorithm outperforms the benchmark algorithms especially when $K$ is large. For instance, when $K=12$, the optimal algorithm obtains about $3\%$, $6\%$, and $20\%$ performance improvements over the three benchmarks respectively.

In Fig. \ref{fig:f4}, we illustrate the relationship between the degradation factor $d$ and the sum offloading rate performance, where $K=10$. As expected, the sum offloading rate is decreasing with $d$ in all considered algorithms while the descending rate of the optimal algorithm is the slowest. This indicates that our proposed algorithm has the best performance resistance against the I/O-interference effect. One observes that, the performance of LR and greedy algorithms is close-to-optimal when $d$ is small. This coincides with the result of special case 4) in Section III-C that when the degradation factor $d$ is zero, the optimal solution can be obtained by the greedy approaches. On the other hand, the line of $d=0$ can been seen as the sum offloading rate of the conventional case without considering the I/O interference issue. Its performance gap with the optimal algorithm can be interpreted as the overestimation of the system performance builded on the optimistic assumption of no I/O interference.

\subsection{Energy Minimization}

\begin{figure}[t]
\centering
\begin{minipage}[t]{0.48\textwidth}
\centering
\includegraphics[width=\textwidth]{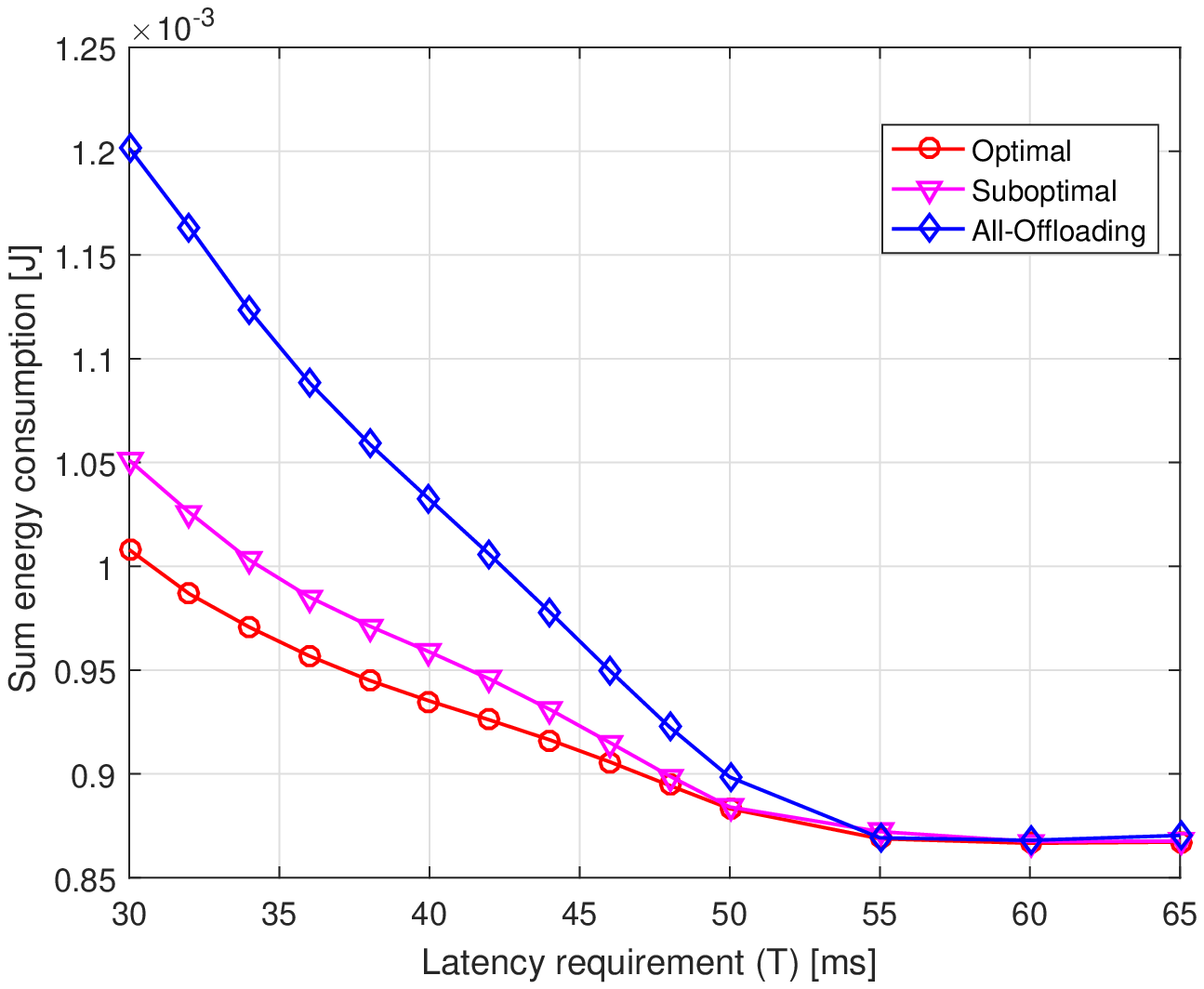}
\vspace*{-10mm}
\caption{Sum energy consumption versus $T$.}\label{fig:f5}
\end{minipage}
\begin{minipage}[t]{0.48\textwidth}
\centering
\includegraphics[width=\textwidth]{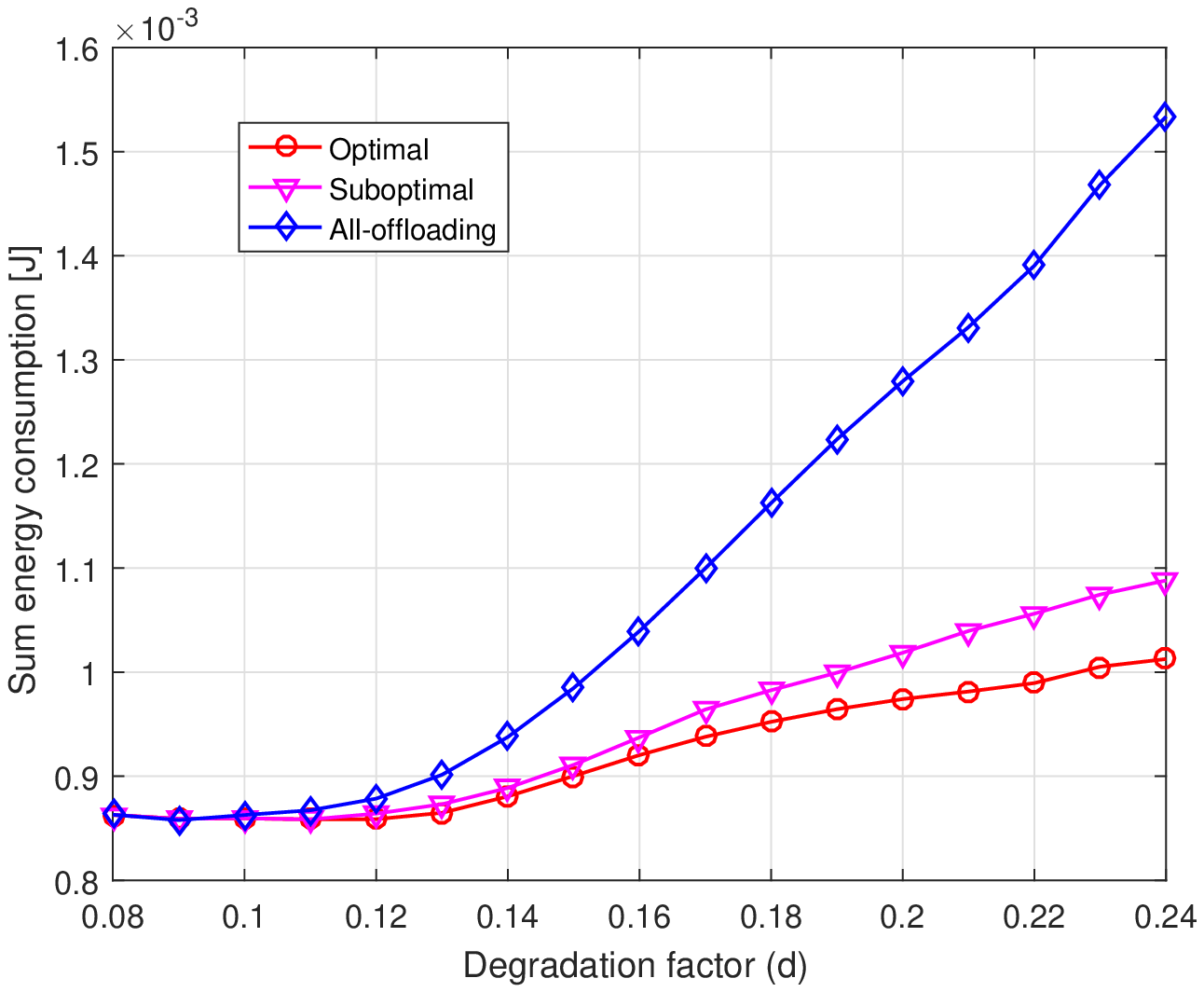}
\vspace*{-10mm}
\caption{Sum energy consumption versus $d$.}\label{fig:f6}
\end{minipage}
\end{figure}

For measuring the energy consumption, we set $\kappa_i=10^{-28}$ \cite{Dynamic_JunZhang} and $p_i=0.1$ W. The data size of task, the required number of CPU cycles per bit, and the local-computing speed follow uniform distribution with $L_i\in [50,100]$ KB, $c_i\in \left[500, 1000\right]$ cycles/bit, and $f_i \in [2\times 10^{8}, 6\times 10^{8}]$ cycles/sec, respectively.

To evaluate the proposed suboptimal algorithm, we present the performance of all-offloading scheme mentioned in the preceding subsection and the optimal performance that is obtained via decomposing Problem (P2) into $K$ mixed-integer linear programming (MILP) subproblems and each subproblem is optimally solved using the MILP solver. The MILP solver implements Branch-and-Bound based algorithms which have complexities exponentially increasing with the size of $K$.

In Fig. \ref{fig:f5}, we investigate the impact of latency constraint $T$ on the sum energy consumption, where $K=10$ and $d=0.2$. First, we see that as the maximum tolerance latency $T$ increases, the sum energy consumption decreases. This is because a more relaxed latency requirement facilitates larger task fraction to offload and consequently saves more energy. However, when $T$ is sufficiently large, the sum energy consumption achieves its minimum and becomes irrespective of $T$ since the optimal offloading scheme always meets the latency constraint in \eqref{eqn:Const}.  Next, it can be observed that the suboptimal algorithm has superior performance compared to the all-offloading scheme especially when $T$ is small. In particular, when $T$ is $30$ ms, the suboptimal algorithm obtains about $14\%$ energy-saving compared to the all-offloading algorithm.

In Fig. \ref{fig:f5}, we present the performance of the proposed suboptimal algorithm versus the degradation factor $d$, where $K=10$.  We  observe that compared with the dramatic increase of sum energy consumption with $d$ in the all-offloading algorithm, the sum energy consumption in the proposed suboptimal algorithm grows at a much slower rate, which is close to the line of the optimal performance.

\section{Conclusions}
In this paper, we investigate joint radio-and-computation resource allocation in a multiuser MEC system, where the computation interference issue has been considered. We formulate two optimization problems: sum offloading rate maximization and sum energy minimization. To address rate maximization, we first solve the optimal offloaded data size and computation time allocation for any given offloading-user set that meets the necessarily optimal condition. Then we develop an optimal algorithm based on Dinkelbach method to find the optimal offloading-user set. For solving energy  minimization, we transform the original problem into an equivalent one that facilitates the scheduling design and propose an algorithm to find a sub-optimal solution. Simulation results demonstrate that our proposed algorithms achieve superior performance gain compared with the benchmark algorithms.


\section*{Appendix}

\appendices
\subsection{Proof of Lemma \ref{lemma1}}
To prove this lemma, it is sufficient to show that for any given $\mathcal{S}$, one of the constraints  $\ell_i \leq t_e r_i \left(1+d\right)^{1-|\mathcal{S}|}$ or $\ell_i\geq 0$ must be active at the optimal $\ell_i^*$, $\forall i\in \mathcal{S}$.

For a given $\mathcal{S}$, since Problem (P1) is an LP problem with a bounded feasible region, there exists an optimal solution located at a vertex (i.e., extreme point)\cite{LP}. Denote $\overline{x}=[\ell_1, \ell_2,...,\ell_{|\mathcal{S}|}, t_e]^T \in \mathbb{R}^{|\mathcal{S}|+1}$ as the vertex that is optimal. By the vertex definition, there are $|\mathcal{S}|+1$ linearly independent active constraints at $\overline{x}$. First, it is easy to check that \eqref{eqn:st_T} should be active at $\overline{x}$. Moreover, the pair of constraints $\ell_i \leq t_e r_i \left(1+d\right)^{1-|\mathcal{S}|}$ and $\ell_i\geq 0$ for each $i\in\mathcal{S}$ should not be active or inactive simultaneously at $\overline{x}$. Both of them can be verified by contradiction as follows.

Suppose the former case is satisfied at $\overline{x}$, then it has $\ell_i=0$ and $t_e=0$, leading to a trivial solution $\overline{x}=\mathbf{0}$ that violates the active condition on \eqref{eqn:st_T}. Next, if the later case is satisfied, i.e., there exists a constraint pair (e.g., $i\in \mathcal{S}$) being inactive simultaneously at $\overline{x}$, we have to select $|\mathcal{S}|$ constraints to be active from the rest $2(|\mathcal{S}|-1)$ constraints in \eqref{eqn:st_ell1}. In this case, another constraint pair (say, $j\in \mathcal{S}$ with $j\neq i$) needs to be active concurrently for composing the active constraint set, which eventually returns back to the former case. Therefore, one and only one of the constraints between $\ell_i \leq t_e r_i \left(1+d\right)^{1-|\mathcal{S}|}$ and $\ell_i\geq 0$ of each $i\in \mathcal{S}$ can be active at $\overline{x}$. This completes the proof.

\subsection{Proof of Proposition \ref{proposition1}}
To prove this, we need the following lemma which can be easily proved.
\begin{lemma}\label{lemma3}
Consider the fractions $x_1/y_1$ and $x_2/y_2$, with $x_i, y_i>0, i=1,2$. Then,
\begin{align*}
\min\left\{\frac{x_1}{y_1}, \frac{x_2}{y_2}\right\} \leq \frac{x_1+x_2}{y_1+y_2} \leq \max\left\{\frac{x_1}{y_1}, \frac{x_2}{y_2}\right\}.
\end{align*}
\end{lemma}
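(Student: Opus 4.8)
The plan is to recognize the middle quantity $\frac{x_1+x_2}{y_1+y_2}$ as a \emph{convex combination} (weighted average) of the two given fractions, from which the two-sided bound is immediate: any weighted average with nonnegative weights lies between the smallest and the largest of the averaged quantities. This is the cleanest route and sidesteps any case analysis.

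First I would use positivity of $y_1,y_2$ (so that $y_1+y_2>0$) to write the identity
\begin{align*}
\frac{x_1+x_2}{y_1+y_2}=\frac{y_1}{y_1+y_2}\cdot\frac{x_1}{y_1}+\frac{y_2}{y_1+y_2}\cdot\frac{x_2}{y_2}.
\end{align*}
The coefficients $\lambda_1\triangleq\frac{y_1}{y_1+y_2}$ and $\lambda_2\triangleq\frac{y_2}{y_1+y_2}$ are strictly positive and satisfy $\lambda_1+\lambda_2=1$, so the mediant is genuinely a convex combination of $\frac{x_1}{y_1}$ and $\frac{x_2}{y_2}$. I would then invoke the elementary fact that $\lambda_1 u+\lambda_2 v\in[\min\{u,v\},\max\{u,v\}]$ for any $\lambda_1,\lambda_2\ge 0$ with $\lambda_1+\lambda_2=1$, applied with $u=x_1/y_1$ and $v=x_2/y_2$, which delivers the stated inequalities directly.

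As an alternative (and as a sanity check), I could instead argue by symmetry: assume without loss of generality that $\frac{x_1}{y_1}\le\frac{x_2}{y_2}$, and prove the two bounds separately by cross-multiplication. The left bound $\frac{x_1}{y_1}\le\frac{x_1+x_2}{y_1+y_2}$ is equivalent, after clearing the positive denominators, to $x_1 y_2\le x_2 y_1$, i.e. to the assumed ordering; the right bound $\frac{x_1+x_2}{y_1+y_2}\le\frac{x_2}{y_2}$ reduces to the same inequality. Either presentation finishes the lemma.

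There is no substantive obstacle here, as the statement is elementary. The only point that requires care in both approaches is that every step relies on the hypothesis $x_i,y_i>0$: positivity of the denominators is what validates the convex-combination identity and, in the alternative route, permits cross-multiplication without reversing the direction of any inequality. I would therefore flag the use of $y_1,y_2>0$ explicitly at the one place it matters.
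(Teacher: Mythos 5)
Your proof is correct. Note that the paper itself offers no proof of this lemma at all---it is stated in Appendix B with only the remark that it ``can be easily proved''---so there is no argument to compare against; your write-up simply fills the gap the authors left. The convex-combination identity
\begin{align*}
\frac{x_1+x_2}{y_1+y_2}=\frac{y_1}{y_1+y_2}\cdot\frac{x_1}{y_1}+\frac{y_2}{y_1+y_2}\cdot\frac{x_2}{y_2}
\end{align*}
is the standard and cleanest route to the mediant inequality, and your cross-multiplication alternative is also sound. One small refinement: only $y_1,y_2>0$ is actually used in either argument (positivity of the $x_i$ is never needed), which you essentially acknowledge in your final paragraph; the hypothesis $x_i>0$ in the statement is simply stronger than necessary.
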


We first prove the proposition from sufficiency. If the given $\mathcal{S}$ satisfies condition \eqref{eqn:S_pro1}, using Lemma \ref{lemma3}, the following inequality holds for all $i\in \mathcal{S}$:
\begin{align}\label{eqn:S_pro1_inequ}
\frac{\omega_i}{(a_i+b_i\gamma_i)}=\frac{\omega_ir_i}{(a_i+b_i\gamma_i)r_i}&\overset{(a)}{\geq} \frac{\sum_{j\in \mathcal{S}\backslash\{i\}}\omega_j r_j +\omega_ir_i}{(1+d)^{|\mathcal{S}|-1}+\sum_{j\in \mathcal{S}\backslash\{i\}}(a_j+b_j\gamma_j)r_j +(a_i+b_i\gamma_i)r_i}\nonumber\\
&\overset{(b)}{\geq}
\frac{\sum_{j\in \mathcal{S}\backslash\{i\}}\omega_jr_j }{(1+d)^{|\mathcal{S}|-1}+\sum_{j\in \mathcal{S}\backslash\{i\}}(a_j+b_j\gamma_j)r_j},
\end{align}
where the right hand side of the first inequality is identical to $R$. The term in the second inequality is the sum offloading rate achieved by setting $\widetilde{\mathcal{S}}=\mathcal{S}\backslash \{i\}$ in \eqref{eqn:tildeS}. (a) holds for all $i\in\mathcal{S}$ since condition \eqref{eqn:S_pro1} is met. (b) is deduced by Lemma \ref{lemma3}. The relation (b) holding for all $i\in \mathcal{S}$ indicates that  $\widetilde{\mathcal{S}}=\mathcal{S}$ has a larger sum offloading rate than any neighbors $\widetilde{\mathcal{S}}=\mathcal{S}\backslash \{i\}$, $\forall i\in\mathcal{S}$. Thus, $\widetilde{\mathcal{S}}=\mathcal{S}$ is the local optimum of Problem \eqref{eqn:tildeS}. According to the results in \cite{Local_Global}, any point of local optimum of Problem \eqref{eqn:tildeS} is also point of global optimum. Therefore, $\widetilde{\mathcal{S}}=\mathcal{S}$ is the optimal solution of Problem \eqref{eqn:tildeS}.

Next, from necessity, since $\widetilde{\mathcal{S}}=\mathcal{S}$ is the optimal solution of Problem \eqref{eqn:tildeS}, the local optimum condition is also met. Thus, we have the relation (b) in \eqref{eqn:S_pro1_inequ} satisfying for all $i\in\mathcal{S}$. Using Lemma \ref{lemma3}, (a) is deduced for any $i\in\mathcal{S}$, i.e., condition \eqref{eqn:S_pro1} holds the given $\mathcal{S}$. This completes the proof.

\subsection{Proof of Proposition \ref{proposition2}}
Let  $\mathcal{S}^*$ and $R^*$ denote the optimal solution and the optimal objective value of the Problem \eqref{eqn:problem_of_S} that has relaxed constraint \eqref{eqn:S_pro1}, respectively. For ease of expression, we sort the entities $\left\{\frac{\omega_i}{a_i+b_i\gamma_i}\right\}$ in $\mathcal{S}^*$ in the descending order $\frac{\omega_1}{a_1+b_1\gamma_1}> \cdots > \frac{\omega_m}{a_m+b_m\gamma_m}$, with $m=|\mathcal{S}^*|$.

The proposition can be proved by contradiction. Suppose that the optimal solution $\mathcal{S}^*$ violates constraint \eqref{eqn:S_pro1}, we have $\frac{\omega_1}{a_1+b_1\gamma_1}>\cdots >\frac{\omega_{j-1}}{a_{j-1}+b_{j-1}\gamma_{j-1}}> R^*> \frac{\omega_j}{a_j+b_j\gamma_j} > \cdots > \frac{\omega_m}{a_m+b_m\gamma_m}$. Denote $\mathcal{S}^\prime= \mathcal{S}^*\backslash \{j\}$. Since $R^*>\frac{\omega_j}{a_j+b_j\gamma_j}$, using Lemma \ref{lemma3}, we have the following  inequality
\begin{align}\label{eqn:Sj_proof3}
R^*&= \frac{\sum_{i\in \mathcal{S}^*}\omega_i r_i}{(1+d)^{|\mathcal{S}^*|-1}+\sum_{i\in \mathcal{S}^*}(a_i+b_i\gamma_i)r_i}\nonumber \\
&\overset{(a)}{<}\frac{\sum_{i\in \mathcal{S}^*\backslash \{j\}}\omega_i r_i}{(1+d)^{|\mathcal{S}^*|-1}+\sum_{i\in \mathcal{S}^*\backslash \{j\}}(a_i+b_i\gamma_i)r_i} \nonumber \\
&\overset{(b)}{<}\frac{\sum_{i\in \mathcal{S}^\prime}\omega_i r_i}{(1+d)^{|\mathcal{S}^\prime|-1}+\sum_{i\in \mathcal{S}^\prime}(a_i+b_i\gamma_i)r_i},
\end{align}
where (a) is deduced from Lemma \ref{lemma3} and (b) is due to $|\mathcal{S}^*|>|\mathcal{S}^\prime|$.  \eqref{eqn:Sj_proof3} shows that sum offloading rate of $\mathcal{S}^\prime$ is larger than that of $\mathcal{S}^*$, which contradicts to the assumption that $\mathcal{S}^*$ is optimal. Notice that $\mathcal{S}^\prime$ may not satisfy \eqref{eqn:S_pro1} at present. However, as long as $\mathcal{S}^\prime$ does not meet condition \eqref{eqn:S_pro1}, we can treat the current $\mathcal{S}^\prime$ as another $\mathcal{S}^*$ and use the same manner in \eqref{eqn:Sj_proof3} to construct a new $\mathcal{S}^\prime$. This guarantees to find $\mathcal{S}^\prime$ that meets \eqref{eqn:S_pro1}, since the extreme case is $\mathcal{S}^\prime$ with $|\mathcal{S}^\prime|=1$ (i.e., single user) that always satisfies \eqref{eqn:S_pro1}.

If $\mathcal{S}$ does not meet \eqref{eqn:S_pro1},  we can always find an $\mathcal{S}^\prime$ satisfying \eqref{eqn:S_pro1} and with larger objective value than $\mathcal{S}$. Thus, it can be concluded that $\mathcal{S}$ violating \eqref{eqn:S_pro1} cannot be the optimal solution of the relaxed problem. This completes the proof.

\subsection{Proof of Lemma \ref{lemma2}}
We provide the sufficiency proof of Lemma \ref{lemma2} since the necessity proof is just reversed. Let $R_m^\prime$ denote the root of $g(R_m)=0$ (note that the existence and uniqueness of the root of $g(R_m)=0$ are proved in \cite{NonLinear}) and $\mathbf{x}^\prime\in \mathcal{F}_m$ an optimal solution of $g(R_m^\prime)$.

Since $g(R_m^\prime)=\max_{x\in \mathcal{F}_m}\left\{N(\mathbf{x})-D(\mathbf{x})R_m^\prime\right\}=0$, we have
\begin{align}\label{eqn:proof_lemma2}
N(\mathbf{x})-D(\mathbf{x})R_m^\prime \leq N(\mathbf{x}^\prime)-D(\mathbf{x}^\prime)R_m^\prime=0, \quad \forall x\in \mathcal{F}_m.
\end{align}
As $D(\mathbf{x}), D(\mathbf{x}^\prime)>0$, $\forall\mathbf{x}\in \mathcal{F}_m$, \eqref{eqn:proof_lemma2} can be re-written as   $R_m^\prime=\frac{N(\mathbf{x}^\prime)}{D(\mathbf{x}^\prime)}\geq \frac{N(\mathbf{x})}{D(\mathbf{x})}, \forall \mathbf{x}\in \mathcal{F}_m$, i.e., $R_m^\prime=R_m^*$ and $\mathbf{x}^\prime=\mathbf{x}^*$ are the optimal objective value and an optimal solution of Problem \eqref{eqn:subproblem_of_S}, respectively. This completes the proof.

\subsection{Proof of Proposition \ref{proposition4}}
With $\frac{1}{a_1+b_1\gamma_1}=... = \frac{1}{a_K+b_K\gamma_K}=\frac{1}{a+b\gamma}$, th sum offloading rate $R$ can be simplified as
\begin{align*}
R=\frac{\sum_{i\in \mathcal{S}}r_i}{(1+d)^{|\mathcal{S}|-1}+(a+b\gamma)\sum_{i\in\mathcal{S}}r_i}=\left(\frac{(1+d)^{|\mathcal{S}|-1}}{\sum_{i\in \mathcal{S}}r_i}+(a+b\gamma)\right)^{-1}.
\end{align*}
Therefore, for maximizing $R$, it is sufficient to minimize $\frac{(1+d)^{|\mathcal{S}|-1}}{\sum_{i\in \mathcal{S}}r_i}$ instead. Observe that for a given $|\mathcal{S}|$, the minimum  $\frac{(1+d)^{|\mathcal{S}|-1}}{\sum_{i\in \mathcal{S}}r_i}$ is achieved by selecting $|\mathcal{S}|$ largest $r_i$'s. For simplicity, we notate $|\mathcal{S}|=n$ and sort $r_i$'s in the descending order $r_1\geq \cdots \geq r_K$. Then the problem becomes finding the minimum point in the sequence $\{f_{n}\triangleq\frac{(1+d)^{n-1}}{\sum_{i=i}^{n}r_i}\}$, with $n=1,\cdots, K$.  It can be checked that the sequence $\{f_{n}\}$ has a monotone property with $n$. Specifically, there exists an index $n_0$, in which $\{f_{n}\}$ is monotonically decreasing when $1\leq n\leq n_0$ and monotonically increasing when $n_0\leq n\leq K$. Therefore, $f_{n_0}$ is the minimum point of sequence $\{f_{n}\}$. By defining $r_0=0$ and letting $f_{n}\leq f_{n-1}$, we derive condition $r_{n}\geq d\sum_{i=0}^{{n}-1}r_i$ and $n_0$ is the largest index satisfying this condition. Thus, $\mathcal{S}=\left\{i~| 1\leq i\leq n_0\right\}$ is the optimal offloading-user set. This completes the proof.

\subsection{Proof of Proposition \ref{proposition5}}
For solving the minimum $T$ of Problem (P3), it can be observed from \eqref{eqn:st_T2} that $t_e$ should be minimized. By \eqref{eqn:st_S2}, the minimum $t_e$ can be expressed as $\max_{i\in \mathcal{S}}\left\{\frac{\ell_i}{r_i (1+d)^{(1-|\mathcal{S}|)}}\right\}$. Meanwhile, since $\mathcal{S}=\{i|\ell_i>0, i\in \mathcal{K}\}$, Problem (P3) can be refined as
\begin{align}
\min_{ {\{\ell_i\}}, ~T} &\quad~ T   \label{eqn:ECM_prof4_probl} \\
{\rm{s.t.}}&\quad \sum_{i\in \mathcal{K}} \left(a_i+ b_i \gamma_i\right)\ell_i + \max_{i\in \mathcal{K}}\left\{\frac{\ell_i}{r_i (1+d)^{1-\sum_{i\in \mathcal{K}}\mathds{1}\{\ell_i>0\}}}\right\}\leq T, \label{eqn:ECM_prof4_Toffl} \\
&\quad \frac{c_i(L_i-\ell_i)}{f_i}\leq T, \quad \forall i\in \mathcal{K}, \label{eqn:ECM_prof4_Tloc}\\
&\quad 0\leq \ell_i \leq L_i , \quad \forall i\in \mathcal{K}, \label{eqn:ECM_prof4_ell}
\end{align}
where $|\mathcal{S}|\triangleq \sum_{i\in \mathcal{K}}\mathds{1}\{\ell_i>0\}$ and $\mathds{1}\{\cdot\}$ is a binary indicator function.

Combining \eqref{eqn:ECM_prof4_Tloc} and \eqref{eqn:ECM_prof4_ell}, the minimum offloaded bits is obtained as $L_i^{\text{min}} = \left[L_i-\frac{Tf_i}{c_i}\right]^+$, $\forall i$. Notice that here $L_i^{\text{min}}$ is a function of $T$. For notation simplicity, we notate $\sum_{i\in \mathcal{K}}\mathds{1}\{\ell_i>0\}= \sum_{i\in \mathcal{K}}\mathds{1}\{L_i^{\text{min}}>0\}\triangleq N(T)$. Then Problem \eqref{eqn:ECM_prof4_probl} can be simplified as
\begin{align}
\min_{ {\{\ell_i\}}, ~T} &\quad~ T   \label{eqn:ECM_prof4_prob2} \\
{\rm{s.t.}}&\quad \sum_{i\in \mathcal{K}}\left(a_i+ b_i \gamma_i\right)L_i^{\text{min}}+\max_{i\in \mathcal{K}}\left\{\frac{L_i^{\text{min}}}{r_i(1+d)^{1-N(T)}}\right\}\leq T, \label{eqn:ECM_feasi3}
\end{align}
where the left hand side of \eqref{eqn:ECM_feasi3} is the minimum time required for completing the minimum offloaded tasks for a given $T$. Since it is monotonically decreasing with $T$ in the interval of $T\in \left[0, \max_{i\in\mathcal{K}}\left\{c_iL_i/f_i\right\}\right]$, it is easily observed that the minimum of Problem \eqref{eqn:ECM_prof4_prob2} is achieved only when \eqref{eqn:ECM_feasi3} holds with equality. This completes the proof.

\bibliographystyle{IEEEtran}
\bibliography{link}

\end{document}